\documentclass[lettersize,journal,12pt,onecolumn]{IEEEtran}
\usepackage{amsmath,amsfonts}
\usepackage{algorithmic}
\usepackage{algorithm}
\usepackage{amssymb}
\usepackage{array}
\usepackage[caption=false,font=normalsize,labelfont=sf,textfont=sf]{subfig}
\usepackage{textcomp}
\usepackage{stfloats}
\usepackage{url}
\usepackage{xcolor}
\usepackage{verbatim}
\usepackage{graphicx}
\usepackage{cite}
\usepackage{stfloats} 
\usepackage{caption}  
\newtheorem{definition}{Definition}
\newtheorem{theorem}{Theorem}
\newtheorem{corollary}{Corollary}

\newtheorem{remark}{Remark}

\begin{document}
\title{Theoretical Analyses of Detectors for Additive Noise Channels with Mean-Variance Uncertainty under Nonlinear Expectation Theory}
	
\author{Wen-Xuan Lang, Guiying Yan, Zhi-Ming Ma

\thanks{The authors are with the National Center for Mathematics and Interdisciplinary Sciences (NCMIS) and Academy of Mathematics and Systems Science, CAS, Beijing 100190, China (e-mail: langwx@amss.ac.cn; yangy@amss.ac.cn; mazm@amt.ac.cn).}
}


		
\maketitle

\begin{abstract}
In classical information theory, both the form and performance of the optimal detector for additive noise channels can be precisely derived, based on the assumption that the channel noise follows a specific probability distribution or a mixture of known distributions, or that the exact distribution exists but is unknown. In this paper, we extend the analyses of detectors for additive noise channel to the situation where the probability model for analyzing channels is uncertain, utilizing nonlinear expectation theory. We consider two types of distribution uncertainties: one with no mean uncertainty but with variance uncertainty, and another with both mean and variance uncertainties. We derive the optimal threshold detectors for binary input additive noise channel under the nonlinear expectation optimal criterion for both scenarios and provide their explicit forms. Our findings reveal that mean uncertainty significantly influences the form of the optimal detector, whereas variance uncertainty does not. Additionally, we propose an estimation method for the uncertain parameters of the channel noise. Finally, we present theoretical analyses and simulated performance results of the newly derived optimal threshold detectors, and compare these results with the performance of optimal detector under classical information theory, which assumes a deterministic probability model. The results of experiments show that our new detection methods outperform conventional methods in most scenarios with uncertain probability models, showing the practical relevance of our theoretical contributions.

\end{abstract}

\begin{IEEEkeywords}
Nonlinear expectation theory, Sublinear expectation, G-normal distribution, Maximal distribution, Additive noise channel, Detection problem.
\end{IEEEkeywords}

\section{Introduction}\label{sec1}

In classical information theory, the additive noise channels are useful and common for some communication channels, such as wireless telephone channels and satellite links \cite{thomas2006elements}. Reliable communication over additive noise channels is a fundamental topic in information theory, studied extensively since the 1950s \cite{shannon1948mathematical,6767457}. The fundamental detection problem in communication channels is accurately identifying the characteristics of signals within a channel to ensure reliable data transmission. The analyses of the detection problem over a Gaussian channel, and the pairwise point error probability, which is the probability of the received point being closer to another point than to the sent point, can be seen in \cite{boutros1998signal,tse2005fundamentals,forney1998modulation}.

Uncertainty inherent in the channel is a key factor in wireless communication, which makes the field both challenging and interesting \cite{tse2005fundamentals,rak2015future}. Characterizing this uncertainty is crucial for understanding and potentially enhancing the reliability of these systems, particularly in the presence of additive noise. In various communication environments, the distributions of noise should be considered as uncertain due to temperature changes, environment noise variations, calibration errors and so on \cite{rojas2022snr}. Many studies have also considered the impact of uncertain noise distribution, addressing various aspects such as covert communication, energy detection, channel estimation, and performance within communication systems \cite{ta2019covert,sobers2017covert,kortun2012throughput,mahendru2019adaptive,gaiera2019performance,dos2024influence}. This uncertain distributions noise can be significant in many applications and must be taken into consideration to model system performance. However, in the previous works, researchers acknowledge the existence of a particular distribution characterizing the channel, even though the specific parameters or form of this distribution are uncertain. Consequently, the probability models in the existing works are assumed to have no uncertainty.

In contrast, for channels with uncertain probability models, the detection problem mentioned in this paper remains largely unexplored in the existing research field. In the complex physical world fraught with a significant amount of unanticipated and nonstationary random phenomena, the assumption of precise and well-defined probability distributions to describe uncertainties is somewhat idealized. This reality challenges the traditional approach of relying on deterministic probability models, which typically assume the existence of an underlying ``true'' probability distribution. Bias or error may be imposed on signal or information processing algorithms once the real distribution of the information source diverges from the estimated distribution.

Nonlinear expectation theory \cite{peng2007g}, an emerging field of research that extends classical probability theory by using sublinear expectations instead of linear expectations, provides a powerful framework for effectively addressing the uncertainty of probability models. Based on the representation theorem, sublinear expectation can be regarded as a family of probability distributions. By incorporating the distributions uncertainty, this theory provides a more robust framework for modeling complex systems. One important result of the theory is \textit{the central limit theorem with law of large numbers} under sublinear expectations \cite{peng2008new,peng2019law}. This theorem indicates that the maximal distribution and the G-normal distribution are highly universal distributions and widely applicable, and can be regarded as a generalization of the Gaussian distribution in probability theory. By integrating these two distributions, one can achieve a comprehensive characterization of the random variable that accounts for both mean and variance uncertainties. Furthermore, these two distributions have offered significant practical advantages in fields such as finance and risk management \cite{peng2023improving,yang2023linear}, where understanding the interplay between mean and variance uncertainties is crucial for effective decision-making and modeling. In the recent monograph \cite{peng2019nonlinear}, this theory has been fully elaborated.

In this paper, we demonstrate that the nonlinear expectation theory can be utilized to analyze the performance of detection problem in additive noise channel where the noise distributions are uncertain. An important component of this work is the focus on the distributions uncertainty of the channel noise. We analyze two types of distribution uncertain cases: one is that the noise random variable has no mean uncertainty but has variance uncertainty, and another is that the noise random variable has both mean and variance uncertainties. We utilize a combination of the maximal distribution and the G-normal distribution to characterize channel noise with mean and variance uncertainties. We derive the optimal detectors for binary input signals in each of these two scenarios and provide theoretical analyses of these systems.

We denote $X(t)$, $Y(t)$ and $Z(t)$ as input, output and noise random variable at time $t$, respectively. Let
\begin{equation}\label{additive noise channel}
Y(t)=X(t)+Z(t).
\end{equation}
There are three characteristics that make our channel model fundamentally different from the one in classical information theory. First, and most importantly, the distribution of the noise $Z(t)$ is uncertain, which is the central concept addressed by using nonlinear expectation theory. Second, the noise $Z(t)$ is independent to the input $X(t)$ under sublinear expectation $\mathbb{E}$. Third, for different times $t_1 < t_2$, the noise $Z(t_2)$ is independent and identically distributed (abbreviation: IID) to $Z(t_1)$ under sublinear expectation $\mathbb{E}$.

The main contributions can be summarized as follows:
\begin{itemize}
	\item[$\bullet$] We study the case when noise random variable with both mean and variance uncertainties, and propose the concept of the \textit{ADUN channel}. For binary input signals $\{x_A,x_B\} \subset \mathbb{R}, x_A>x_B$ transmitted over this channel, we derive the uncertain family of probability distributions of channel output. Furthermore, we give the necessary and sufficient condition for the existence of the optimal threshold detector. If exist, the optimal threshold detector is to choose $x_A$ when the received signal is greater than $\frac{x_A+x_B+\overline{\mu}+\underline{\mu}}{2}$ and $x_B$ otherwise. Compared with the traditional minimum distance detector, we find that our optimal threshold detector outperforms conventional method in most cases where the mean of noise has uncertainty. We also provide a method for estimating the range of mean uncertainty and variance uncertainty of channel noise.
	
	\item[$\bullet$] We consider the case when noise random variable with no mean uncertainty but with variance uncertainty, and propose the concept of the \textit{ADUN channel without mean uncertainty}. Given the binary input signals $\{x_A, x_B\} \subset \mathbb{R}, x_A>x_B$ are transmitted across this channel, we also derive the uncertain family of probability distributions of channel output. Furthermore, we obtain that the minimum distance rule, which is shown as choosing $x_A$ when the received signal is greater than $\frac{x_A+x_B}{2}$ and $x_B$ otherwise, is the optimal detector in this case. We also study the detection in the presence of both Rayleigh channel fading and communication noise. We provide sufficient conditions under which the detection problem in fading channels can be effectively transformed into that of additive noise channels when the receiver has knowledge of the channel fading.
\end{itemize}

Being the first to consider the impact of the inherent uncertainty in probability models on specific scenarios in wireless communication, we develop novel approaches to analyze the performance of channels. This paper develops the application area of sublinear expectation. Based on our current understanding, this study pioneers research into detection in additive noise channels under nonlinear expectation theory.

The remainder of the paper is organized as follows. In Section \ref{sec2}, we recall some basic results in nonlinear expectation theory and the corresponding notions, as well as the detection problem in additive noise channels in classical information theory. In Section \ref{sec3}, we establish a model for additive noise channels with distribution uncertainty, and propose the concepts of \textit{ADUN channel} and \textit{ADUN channel without mean uncertainty}, which play crucial roles in our analyses. In Section \ref{sec4}, we derive the optimal threshold detectors under the nonlinear expectation optimality criterion for both ADUN channels and ADUN channels without mean uncertainty when employing binary input signals. Additionally, we propose estimation methods for the uncertain parameters of the channel noise. In Section \ref{sec5}, we provide sufficient conditions that allow for the effective conversion of the detection problem in fading channels into the problem that in additive noise channels, assuming the receiver has knowledge of the channel fading. The theoretical and simulated performance results of the newly derived optimal detectors are shown in Section \ref{sec6}. Finally, we draw conclusions in Section \ref{sec7}.

\section{Preliminaries}\label{sec2}

\subsection{Basic notions of nonlinear expectation theory}\label{subsec2.1}

Nonlinear expectation theory is a novel axiomatic system established by Peng \cite{peng2007g}. The central idea of this theory is that random variables have distributions uncertainty. In this subsection, we provide a concise overview of some useful definitions and theorem from the nonlinear expectation theory. For more details, researchers can refer to \cite{peng2019nonlinear,peng2017theory}.

Let $\Omega$ be a given set. Let $\mathcal{H}$ be a linear space of real valued functions defined on $\Omega$. Suppose that $\mathcal{H}$ satisfies the following three conditions:

1. $c\in \mathcal{H}$ for each constant $c \in \mathbb{R}$.

2. $|X|\in \mathcal{H}$ if $X \in \mathcal{H}$.

3. $\phi(X_1,\cdots, X_n)\in \mathcal{H}$ for each $\phi \in \mathbb{L}^{\infty}(\mathbb{R}^n)$ if $X_1,\cdots, X_n\in \mathcal{H}$.

\noindent Here $\mathbb{L}^{\infty}(\mathbb{R}^n)$ denotes the space of bounded Borel-measurable functions. The space $\mathcal{H}$ will be used as the space of random variables.

\begin{definition}
	A \textit{sublinear expectation} $\mathbb{E}:\mathcal{H} \rightarrow \mathbb{R}$ is a functional defined on the space $\mathcal{H}$ satisfying the following properties:
	\begin{enumerate}
		\item{\textit{Monotonicity}: $\mathbb{E}[X]\geq \mathbb{E}[Y]$, if $X\geq Y$.}
		\item{\textit{Constant preserving}: $\mathbb{E}[c]=c,\forall c \in \mathbb{R}$.}
		\item{\textit{Sub-additivity}: $\mathbb{E}[X+Y]\leq \mathbb{E}[X]+\mathbb{E}[Y]$, $\forall X,Y \in \mathcal{H}$.}
		\item {\textit{Positive homogeneity}: $\mathbb{E}[\lambda X]=\lambda \mathbb{E}[X]$, for $\lambda>0$.}
	\end{enumerate}
	The triplet $(\Omega,\mathcal{H},\mathbb{E})$ is called a \textit{sublinear expectation space}. If $\mathbb{E}$ satisfies only 1 and 2, then $\mathbb{E}$ is called a \textit{nonlinear expectation} and $(\Omega,\mathcal{H},\mathbb{E})$ is called a \textit{nonlinear expectation space}.
\end{definition}

Given a sublinear expectation $\mathbb{E}$, the conjugate expectation $\mathcal{E}$ of sublinear expectation $\mathbb{E}$ is defined as
\begin{equation}
\mathcal{E}[X]:=-\mathbb{E}[-X],\quad \forall X\in \mathcal{H}.
\end{equation}

\begin{theorem}
	Let $\mathbb{E}$ be a sublinear expectation defined on $(\Omega,\mathcal{H})$. Then there exists a family of linear functionals $\{E_{\theta}:\theta \in \Theta \}$ defined on $\mathcal{H}$, such that
	\begin{equation}
	\mathbb{E}[X]=\sup_{\theta \in \Theta }E_{\theta}[X].
	\end{equation}
	If $\mathbb{E}$ is also continuous from above, i.e. $\mathbb{E}[X_{i}]\downarrow 0$ for each sequence $\{ X_{i}\}_{i=1}^{\infty}$ of random variables in $\mathcal{H}$ such that $X_{i}\downarrow 0$, then for each $\theta \in \Theta$, there exists a probability measure $P_{\theta}$ defined on the measurable space $(\Omega,\sigma(\mathcal{H}))$ such that
	\begin{equation}
	E_{\theta}[X]=\int_{\Omega} XdP_{\theta}, \quad \forall X\in \mathcal{H}.
	\end{equation}
	Then $E_{\theta}$ is also denoted as $E_{P_{\theta}}$ and $\mathbb{E}[\cdot]=\underset{\theta \in \Theta}{\sup} E_{P_{\theta}}[\cdot]$. Here ``$ X_i \downarrow 0$ " means $X_i$ monotonically decreases to zero, and $\sigma(\mathcal{H})$ is the smallest $\sigma$-algebra generated by $\mathcal{H}$.
\end{theorem}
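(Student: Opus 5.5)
The proof splits into two parts. First we obtain the representation $\mathbb{E}=\sup_\theta E_\theta$ by a Hahn--Banach argument. Then, under continuity from above, we turn each linear functional $E_\theta$ into a probability measure via the Daniell--Stone theorem.

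\emph{Step 1 (representation by linear functionals).} Let $\Theta$ be the set of all linear functionals $L$ on $\mathcal{H}$ dominated by $\mathbb{E}$, i.e.\ $L[X]\le \mathbb{E}[X]$ for all $X\in\mathcal{H}$. By sub-additivity and positive homogeneity, $\mathbb{E}$ is a sublinear functional, so $\sup_{L\in\Theta}L[X]\le \mathbb{E}[X]$ holds trivially. For the reverse inequality, fix $X_0\in\mathcal{H}$. On the one-dimensional subspace $\{aX_0:a\in\mathbb{R}\}$ define $L_0[aX_0]=a\,\mathbb{E}[X_0]$. This $L_0$ is dominated by $\mathbb{E}$:
\begin{itemize}
\item for $a\ge 0$ this is positive homogeneity;
\item for $a<0$, sub-additivity gives $0=\mathbb{E}[0]\le \mathbb{E}[X_0]+\mathbb{E}[-X_0]$, hence $-\mathbb{E}[X_0]\le\mathbb{E}[-X_0]$, so $a\mathbb{E}[X_0]\le |a|\,\mathbb{E}[-X_0]=\mathbb{E}[aX_0]$.
\end{itemize}
The Hahn--Banach extension theorem (sublinear-functional version, no topology needed) extends $L_0$ to a linear $L$ on $\mathcal{H}$ with $L\le\mathbb{E}$. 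Thus $L\in\Theta$ and $L[X_0]=\mathbb{E}[X_0]$, so the supremum is attained.

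Each such $L$ is also monotone and normalized. If $X\ge 0$, then $L[-X]\le\mathbb{E}[-X]\le\mathbb{E}[0]=0$ by monotonicity, so $L[X]\ge 0$. For constants, $L[c]\le \mathbb{E}[c]=c$ and $-L[c]=L[-c]\le -c$, so $L[c]=c$. Hence every $E_\theta$ is a positive linear functional with $E_\theta[1]=1$.

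\emph{Step 2 (measures).} Assume $\mathbb{E}$ is continuous from above and let $X_i\downarrow 0$ in $\mathcal{H}$. Positivity gives $0\le E_\theta[X_i]\le \mathbb{E}[X_i]\downarrow 0$, so every $E_\theta$ satisfies the Daniell condition. The space $\mathcal{H}$ is a vector lattice, because $X\vee Y=\tfrac12(X+Y+|X-Y|)$ and $|X|\in\mathcal{H}$. It also satisfies Stone's condition $X\wedge 1\in\mathcal{H}$, by the same identity, since constants belong to $\mathcal{H}$. The Daniell--Stone theorem then yields a unique measure $P_\theta$ on $\sigma(\mathcal{H})$ with $E_\theta[X]=\int_\Omega X\,dP_\theta$ for all $X\in\mathcal{H}$. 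Applying this to $X=1$ gives $P_\theta(\Omega)=E_\theta[1]=1$, so $P_\theta$ is a probability measure.

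The main obstacle is Step 2. Its substance is the construction of the measure, i.e.\ the Carathéodory-type extension inside Daniell--Stone, and checking that $\mathcal{H}$ is exactly the kind of Stone vector lattice to which that theorem applies. I plan to invoke Daniell--Stone as a standard result rather than reprove it. Step 1 is routine once the one-dimensional domination check above is done.
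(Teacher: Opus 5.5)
Your proof is correct: the domination check on the line $\{aX_0\}$, the algebraic Hahn--Banach extension, the verification that every dominated linear functional is monotone and constant-preserving, and the Daniell--Stone step are all sound. For that last step you correctly use that $\mathcal{H}$ is a Stone vector lattice and that $0\le E_\theta[X_i]\le\mathbb{E}[X_i]\downarrow 0$. The paper states this result without proof, citing Peng's monograph, and your argument is essentially the standard proof given there.
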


\begin{definition}
	Let $X_1$ and $X_2$ be two random variables defined on sublinear expectation spaces $(\Omega,\mathcal{H},\mathbb{E}_1)$ and $(\Omega,\mathcal{H},\mathbb{E}_2)$, respectively. They are called \textit{identically distributed}, denoted by $X_1\overset{d}{=}X_2$, if for any $\phi \in \mathbb{L}^{\infty}(\mathbb{R})$,
	\begin{equation}
	\mathbb{E}_1[\phi(X_1)]=\mathbb{E}_2[\phi(X_2)].
	\end{equation}
\end{definition}

\begin{definition}\label{definition:independent}
	In a sublinear expectation space $(\Omega,\mathcal{H},\mathbb{E})$, a random variable $Y$ is said to be \textit{independent} of another random variable $X$ under $\mathbb{E}[\cdot]$ if for any $\phi \in \mathbb{L}^{\infty}(\mathbb{R}^2)$,
	\begin{equation}\label{eq:independent}
	\mathbb{E}[\phi(X,Y)]=\mathbb{E}[\mathbb{E}[\phi(x,Y)|x=X]].
	\end{equation}
\end{definition}

\begin{definition}
	The random variable $\overline{X}$ is said to be an \textit{independent copy} of $X$ if $\overline{X}$ is independent of $X$ and $\overline{X}\overset{d}{=}X$.
\end{definition}

In this new framework, the nonlinear versions of the notions of independence and identical distribution play crucial roles.

\begin{definition}\label{maximal distribution}
	A $n$-dimensional random variable $\eta$ defined on a sublinear expectation space $(\Omega,\mathcal{H},\mathbb{E})$ is called \textit{maximally distributed} if there exists a bounded, closed, and convex subset $\Gamma \subset \mathbb{R}^n$ such that for any $\varphi \in C_{l,lip}(\mathbb{R}^n)$
	\begin{equation}
	\mathbb{E}[\varphi(\eta)]=\sup_{x\in \Gamma}\varphi(x).
	\end{equation}
	We denote $\eta \overset{d}{=} M_{\Gamma}$. The distribution of $\eta$ is called \textit{maximal distribution}.
	
	Specifically, when $n=1$, there is $\Gamma = [\underline{\mu},\overline{\mu}]$, where $\underline{\mu}=-\mathbb{E}[-\eta]$, $\overline{\mu} = \mathbb{E}[\eta]$. Then we have $\eta \overset{d}{=} M_{[\underline{\mu},\overline{\mu}]}$ and
	\begin{equation}
	\mathbb{E}[\varphi(\eta)]=\sup_{x\in [\underline{\mu},\overline{\mu}]}\varphi(x).
	\end{equation}
\end{definition}

\begin{definition}\label{G-normal}
	A random variable $\delta$ defined on a sublinear expectation space $(\Omega,\mathcal{H},\mathbb{E})$ is called \textit{G-normally distributed} if
	\begin{equation}
	a\delta+b\overline{\delta} \overset{d}{=} \sqrt{a^2+b^2}\delta, \quad a,b\geq 0,
	\end{equation}
	where $\overline{\delta}$ is an independent copy of $\delta$. The distribution of $\delta$ is called \textit{G-normal distribution}.
\end{definition}

\begin{remark}\label{Remark:G-normal}
	For G-normally distributed $\delta$, there is $\mathbb{E}[\delta]=-\mathbb{E}[-\delta]=0$. Therefore, $\delta$ has no mean-uncertainty.
\end{remark}

The following Theorem \ref{G-normal and PDE-1} reveals a deep and essential relation between the probability under uncertainty and second order fully nonlinear parabolic equations.

\begin{theorem}\label{G-normal and PDE-1}
	A random variable $\delta$ is G-normally distributed with lower variance $\underline{\sigma}^2$ and upper variance $\overline{\sigma}^2$ ($0\leq \underline{\sigma} \leq \overline{\sigma}$), denoted by $\delta \overset{d}{=} N(0,[\underline{\sigma}^2,\overline{\sigma}^2])$, if for every $\varphi \in C_{l,lip}(\mathbb{R}^n)$, the function
	\begin{equation}
	u(t,x) := \mathbb{E}[\varphi(x+\sqrt{t}\delta)], (t,x) \in [0,\infty)\times R ,
	\end{equation} 
	is the unique viscosity solution of the following Cauchy problem
	\begin{equation}
	\left\{
	\begin{array}{ll}
	\partial_t u - G(\partial_{xx}^2 u) = 0  \\
	u(0,x)=\varphi(x)
	\end{array}
	\right. ,
	\end{equation}
	where $G(a)=\frac{1}{2}[\overline{\sigma}^2 (a)^{+} - \underline{\sigma}^2 (a)^{-}]$. In this case, there holds $\overline{\sigma}^2=\mathbb{E}[\delta^2]$ and $\underline{\sigma}^2=-\mathbb{E}[-\delta^2]$.
\end{theorem}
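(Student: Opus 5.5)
The plan is to verify the defining property of Definition \ref{G-normal} directly from the hypothesis. We assume that for every $\varphi\in C_{l,lip}(\mathbb{R})$ the function $u^{\varphi}(t,x):=\mathbb{E}[\varphi(x+\sqrt{t}\delta)]$ is the unique viscosity solution of the $G$-heat equation with $u^{\varphi}(0,\cdot)=\varphi$. From this we want to deduce two things: that $a\delta+b\overline{\delta}\overset{d}{=}\sqrt{a^2+b^2}\,\delta$ for all $a,b\geq 0$, and that $\overline{\sigma}^2=\mathbb{E}[\delta^2]$, $\underline{\sigma}^2=-\mathbb{E}[-\delta^2]$. Throughout, test functions are taken in $C_{l,lip}$, as in the PDE characterization. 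This is the class on which identities in distribution are checked in Peng's framework, passing to bounded Borel $\varphi$ by the usual monotone approximation.

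\textbf{Semigroup (flow) property.} This is the key step. For $s\ge 0$ put $\psi_s:=u^{\varphi}(s,\cdot)$. The growth and Lipschitz estimates of sublinear expectations give $|\psi_s(x)-\psi_s(y)|\le C(1+|x|^m+|y|^m)|x-y|$, so $\psi_s\in C_{l,lip}$. The time-shifted function $v(t,x):=u^{\varphi}(t+s,x)$ is again a viscosity solution of $\partial_t v-G(\partial^2_{xx}v)=0$, since the equation is autonomous, and it has initial datum $\psi_s$. By hypothesis $u^{\psi_s}$ is the \emph{unique} viscosity solution with that datum. Hence
\begin{equation*}
u^{\varphi}(t+s,x)=u^{\psi_s}(t,x)=\mathbb{E}\big[\psi_s(x+\sqrt{t}\delta)\big].
\end{equation*}
I expect this to be the main obstacle. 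One must make sure that the solutions lie in the uniqueness class, namely polynomial growth and continuity up to $t=0$, and that the time shift preserves the viscosity property. These are routine points, but they are where care is needed.

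\textbf{Stability of $\delta$.} Fix $a,b\ge0$; if $a=0$ or $b=0$ the identity is trivial. Let $\overline{\delta}$ be an independent copy. Definition \ref{definition:independent} gives
\begin{equation*}
\mathbb{E}[\varphi(a\delta+b\overline{\delta})]=\mathbb{E}\big[\mathbb{E}[\varphi(y+b\overline{\delta})]_{y=a\delta}\big].
\end{equation*}
Since $\overline{\delta}\overset{d}{=}\delta$, the inner expectation equals $u^{\varphi}(b^2,y)=\psi_{b^2}(y)$. The outer expectation is then $\mathbb{E}[\psi_{b^2}(0+\sqrt{a^2}\delta)]=u^{\psi_{b^2}}(a^2,0)$. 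By the flow property this equals $u^{\varphi}(a^2+b^2,0)=\mathbb{E}[\varphi(\sqrt{a^2+b^2}\,\delta)]$. This holds for every test function $\varphi$, which is precisely $a\delta+b\overline{\delta}\overset{d}{=}\sqrt{a^2+b^2}\delta$.

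\textbf{Variances.} Take $\varphi(x)=x^2$. The classical function $w(t,x)=x^2+\overline{\sigma}^2t$ satisfies $\partial_tw=\overline{\sigma}^2=G(2)$, so by uniqueness $u^{\varphi}=w$. Evaluating at $(1,0)$ gives $\mathbb{E}[\delta^2]=\overline{\sigma}^2$. Now take $\varphi(x)=-x^2$. The function $-x^2-\underline{\sigma}^2t$ solves the equation because $G(-2)=-\underline{\sigma}^2$, so $\mathbb{E}[-\delta^2]=-\underline{\sigma}^2$. Together with the stability identity, this identifies $\delta\overset{d}{=}N(0,[\underline{\sigma}^2,\overline{\sigma}^2])$.
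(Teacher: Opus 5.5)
The paper does not prove this statement; it is quoted in Section~\ref{subsec2.1} from Peng's theory. Read literally, your argument is correct: the PDE property is the hypothesis, and stability as in Definition~\ref{G-normal} plus the two variance identities is the conclusion. You have put the weight in the right place. In this direction there is no stability to fall back on, so the flow property $u^{\varphi}(t+s,\cdot)=u^{\psi_s}(t,\cdot)$ has to come from uniqueness, i.e.\ the comparison principle for the $G$-heat equation among continuous functions of polynomial growth. Your time-shifted solution does lie in that class. The stability computation and the two explicit classical solutions (using $G(2)=\overline{\sigma}^2$, $G(-2)=-\underline{\sigma}^2$) are right.

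This is essentially how Peng \emph{constructs} $G$-normal laws from the $G$-heat equation. The result in Peng's monograph that the paper is paraphrasing, however, mainly runs the other way: if $a\delta+b\overline{\delta}\overset{d}{=}\sqrt{a^2+b^2}\,\delta$, then $u$ is the unique viscosity solution with $G(a)=\frac{1}{2}\mathbb{E}[a\delta^2]$. That proof has a different structure:
\begin{itemize}
\item the flow property $u(t+s,x)=\mathbb{E}[u(t,x+\sqrt{s}\delta)]$ is read off from stability and independence, with no PDE uniqueness;
\item the viscosity inequalities come from a second-order Taylor expansion of a test function at the touching point, using $\mathbb{E}[\delta]=\mathbb{E}[-\delta]=0$;
\item uniqueness is the comparison theorem.
\end{itemize}
The paper relies on that direction too. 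In Theorem~\ref{theorem:sufficient G-normal}, stability of $\mathrm{Re}(hZ')$ together with its two second moments is used to conclude $\mathrm{Re}(hZ')\overset{d}{=}N(0,[|h|^2\underline{\sigma}^2,|h|^2\overline{\sigma}^2])$. So if the statement is meant as a characterization, your proposal supplies only one half, and the other half needs the Taylor-expansion argument.

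One sentence in your proposal should be deleted: the claim that identities checked on $C_{l,lip}$ pass to bounded Borel $\varphi$ ``by the usual monotone approximation''. This is not available for sublinear expectations in general. Continuity from above fails, and the values on continuous test functions determine only the weakly closed convex hull of the family of laws. For example, take the family of point masses at points of $[0,1]$ and the family of uniform laws on nondegenerate subintervals of $[0,1]$. Both give $\sup_{[0,1]}\varphi$ for every continuous $\varphi$, yet their upper expectations of $I_{\{1/2\}}$ are $1$ and $0$ respectively.

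Since your hypothesis only concerns $C_{l,lip}$ test functions, it cannot yield the stability identity in the bounded-Borel sense of the paper's definition of $\overset{d}{=}$. What you prove is stability in Peng's $C_{l,lip}$ sense, which is the sense of the source result, and you should state it that way. Going the other way is harmless: applying the paper's bounded-Borel independence and identical distribution to polynomial-growth functions only requires truncating and using the moment bounds.
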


Let $\{\xi_k\}_{k\geq 1}$ be a sequence of IID random variables under sublinear expectation $\mathbb{E}$, $\underline{\mu} = -\mathbb{E}[-\xi_1]$ and $\overline{\mu}=\mathbb{E}[\xi_1]$. $\{\zeta_k\}_{k\geq 1}$ be another sequence of IID random variables under sublinear expectation $\mathbb{E}$ with $-\mathbb{E}[-\zeta_1]=\mathbb{E}[\zeta_1]=0$ and $\underline{\sigma}^2 = -\mathbb{E}[-\zeta_{1}^{2}]$, $\overline{\sigma}^2 = \mathbb{E}[\zeta_{1}^{2}]$. The main result of \textit{the central limit theorem with law of large numbers} under sublinear expectations is
\begin{equation}
\lim_{n\rightarrow \infty} \mathbb{E}\left[\varphi\left(\sum_{k=1}^{n}\left(\frac{\xi_k}{n}+\frac{\zeta_k}{\sqrt{n}}\right)\right)\right] = \mathbb{E}[\varphi(\eta+\delta)],
\end{equation}
where $\eta$ is maximally distributed and denoted by $M_{[\underline{\mu},\overline{\mu}]}$, $\delta$ is G-normally distributed and denoted by $N(0,[\underline{\sigma}^2,\overline{\sigma}^2])$. Specifically, the following two results can be obtained from the above theorem.

The first result is the law of large numbers (LLN) under sublinear expectations\footnote{In some literature, this result is also referred to as the weak law of large numbers (wLLN) under sublinear expectations.}
\begin{equation}
\lim_{n\rightarrow \infty} \mathbb{E}\left[\varphi\left(\sum_{k=1}^{n}\frac{\xi_k}{n}\right)\right] = \mathbb{E}[\varphi(\eta)],
\end{equation}
which shows that the uncertainty of distributions of $\frac{\sum_{k=1}^{n}\xi_k}{n}$ is approximately equivalent to all probability distribution of random variables taking values in $[\underline{\mu},\overline{\mu}]$. The second result is the central limit theorem (CLT) under sublinear expectations
\begin{equation}
\lim_{n\rightarrow \infty} \mathbb{E}\left[\varphi\left(\sum_{k=1}^{n}\frac{\zeta_k}{\sqrt{n}}\right)\right] = \mathbb{E}[\varphi(\delta)],
\end{equation}
which shows that the uncertainty of distributions of $\sum_{k=1}^{n}\frac{\zeta_k}{\sqrt{n}}$ converge to the G-normal distribution.

The maximal distribution serves to describe the uncertainty associated with the mean of the random variable, providing a framework that reflects the potential range of values that the mean can take. This is particularly useful when the mean is not precisely determined, allowing for a more flexible representation of the underlying mean uncertainty. The G-normal distribution addresses the uncertainty related to the variance. It is instrumental in describing the behavior of sums of independent random variables under conditions of variance uncertainty. Therefore, characterizing random variables with both uncertain means and variances can be effectively achieved by employing a combination of the maximal distribution and the G-normal distribution.

\subsection{Detection in additive noise channel with deterministic probability model}\label{subsec2.2}

In this subsection,we briefly introduce the detection problem in additive noise channels in information theory. We first introduce the relevant results of AWGN channels, and then extend them to fading channels. For more details, one can see \cite{tse2005fundamentals}.

Let $a$ be a complex number, $\mathrm{Re}(a)$ and $\mathrm{Im}(a)$ denote the real part and the imaginary part of $a$ respectively, i.e., $a=\mathrm{Re}(a)+i\mathrm{Im}(a)$. The complex conjugate of $a$ is denoted as $(a)^*$. $Q(v)=P(V>v)$, $V\sim \mathcal{N}(0,1)$.

Consider a real additive noise channel with white Gaussian noise, i.e., AWGN channel, given by
\begin{equation}
Y=X+Z, \quad Z\sim \mathcal{N}(0,\sigma^2),
\end{equation}
where $X,Y,Z$ denote input, output, and noise random variable, respectively. For the binary input signal $X\in \{x_A,x_B\} \subset \mathbb{R}$, $x_A>x_B$, the signal-to-noise ratio (SNR) is defined as $\frac{(x_A-x_B)^2}{8\sigma^2}$ \cite{tse2005fundamentals}. The detection problem requires judging whether $x_A$ or $x_B$ was the transmitted signal, given the observation $Y=y$. Determining which detector performs better is a significant and crucial question. The optimal detector for a channel represents the most effective strategy to ensure the transmission of information with minimal errors. Let $X$ be $x_A$ or $x_B$ with equiprobability. Then, based on the Bayes' rule, the optimal detector can be simplified as the \textit{minimun distance} detector and can be interpreted as choosing the nearest neighboring transmit symbol, i.e., choosing $x_A$ if
\begin{equation}
|y-x_A|\leq |y-x_B|.
\end{equation}
The probability of making an error is
\begin{equation}
P_e=Q\left(\frac{|x_A-x_B|}{2\sigma}\right)=Q(\sqrt{2\mathrm{SNR}}).
\end{equation}

Now consider a standard Rayleigh fading channel, characterized as 
\begin{equation}
Y=HX+Z,
\end{equation}
where $X,Y,Z,H$ denote input, output, noise and channel fading random variable, respectively, $H \sim \mathcal{CN}(0,1)$ and $Z \sim \mathcal{CN}(0,2\sigma^2)$. Suppose the receiver knows the channel fading, then the detection of $X=x$ from $Y=y$ can be done exactly as in the case of an AWGN channel, by simply equating this fading channel to the following real channel
\begin{equation}
S = |H|X + W,
\end{equation}
where $S=\mathrm{Re}\left( \left(\frac{H}{|H|}\right)^*Y \right)$ and $W \sim \mathcal{N}(0,\sigma^2)$. Then for a given value of $H=h$, the error probability of detecting $X$ is
\begin{equation}
P_{e,h}=Q\left(\frac{|h(x_A-x_B)|}{2\sigma}\right)=Q(\sqrt{2|h|^2\mathrm{SNR}}).
\end{equation}
Therefore, the overall error probability is
\begin{equation}
P_e = E[Q(\sqrt{2|H|^2\mathrm{SNR}})] = \frac{1}{2}\left(1-\sqrt{\frac{\mathrm{SNR}}{1+\mathrm{SNR}}}\right).
\end{equation}

Based on the discussion above, under the assumption of white Gaussian noise, the error probabilities, whether in AWGN channel or Rayleigh fading channel, depend solely on the distance between the two transmitted symbols $x_A$ and $x_B$. Under the framework of classical information theory, the performance of detection in additive noise channels can be precisely derived into explicit expressions.

\section{Additive noise channels with uncertain probability models}\label{sec3}

Consider the additive noise channel (\ref{additive noise channel}). Due to $Z(t_2)$ is IID to $Z(t_1)$ when $t_2>t_1$, we can focus on one symbol time and drop the time index. With the fixed input and output alphabets of the channel, the main factor affecting the performance of this channel becomes the noise within the channel. Consequently, when addressing the modeling of the channel with uncertain probability models, the distribution of noise is also uncertain.

In classical probability theory, mean and variance are important parameters of random variables, providing information about the central tendency and dispersion of the random variable. Determining the mean and variance of a random variable can greatly limit its distribution. In the context of nonlinear expectation theory, there are four crucial parameters $\mathbb{E}[Z]$, $-\mathbb{E}[-Z]$, $\mathbb{E}[Z^2]$ and $-\mathbb{E}[-Z^2]$ for random variable $Z$ on sublinear expectation space $(\Omega,\mathcal{H},\mathbb{E})$, which characterize the mean and variance uncertainties  of $Z$ and directly affect the range of uncertain distribution families of random variables. Hence, we consider two cases and propose the corresponding required concepts. The first is when both the mean and variance of the noise are uncertain, in this situation, we utilize a combination of the G-normal distribution and the maximum distribution to model the noise. We provide the following definition. The second is a special case of the first, occuring when there is no uncertainty in the mean but the variance is uncertain; and here, we apply the G-normal distribution to characterize the noise.

\begin{definition}\label{definition:AGDN}
	The \textit{additive distributionally uncertain noise (ADUN) channel} under sublinear expectation space $(\Omega,\mathcal{H},\mathbb{E})$ is given by
	\begin{equation}\label{AGDN channel}
	Y = X + Z.
	\end{equation}
	Here the noise random variable $Z$ satisfies
	\begin{equation}
	Z=M+\delta,
	\end{equation}
	where $M$ follows a maximum distribution, $\delta$ follows a G-normal distribution. $\delta$ is indepndent of $M$ under sublinear expectation $\mathbb{E}$.
	
	Furthermore, the \textit{ADUN channel without mean uncertainty} under sublinear expectation space $(\Omega,\mathcal{H},\mathbb{E})$ is given by
	\begin{equation}\label{AGNN channel}
	Y=X+\delta.
	\end{equation}
\end{definition}

\begin{remark}
	We assume that $\delta$ is independent of $M$ under $\mathbb{E}$ as defined in Definition~\ref{definition:independent}, and we require the equality (\ref{eq:independent}) to hold for all bounded measurable functions $\phi$ (including indicator functions). This is a modeling assumption motivated by the fact that in many practical scenarios, the mechanisms causing mean uncertainty and variance uncertainty are of distinct physical origins, which makes the assumption plausible in practice. The use of measurable functions is the natural framework for evaluating the upper and lower probabilities that characterize the channel.
\end{remark}

\section{Detection with uncertain probability models}\label{sec4}

We present the main results for the detection problem in additive noise channels with uncertain probability models, relying on a sublinear expectation space $(\Omega,\mathcal{H},\mathbb{E})$ where the uncertainty of the probability measures corresponding to $\mathbb{E}$ is denoted by $\mathcal{P}_{\Theta}=\{P_{\theta}\}_{\theta \in \Theta}$.

\subsection{Detection in ADUN channel}\label{subsec4.1}

Let us discuss the real ADUN channel, which is given in Definition \ref{definition:AGDN}, denoted as $Y=X+Z$. Here, $Z$ is the noise random variable, independent of $X$ under sublinear expectation $\mathbb{E}$, and satisfies
\begin{equation}
Z=M+\delta,
\end{equation}
where $M \overset{d}{=} M_{[\underline{\mu},\overline{\mu}]}$, $\delta \overset{d}{=} N(0,[\underline{\sigma}^2,\overline{\sigma}^2])$. $\delta$ is indepndent of $M$ under sublinear expectation $\mathbb{E}$.

With regard to random variables $Z$, $M$ and $\delta$, the following parameters are typically used in this model.
\begin{equation*}
\overline{\mu}=\mathbb{E}[M],\quad \underline{\mu}=-\mathbb{E}[-M],
\end{equation*}
\begin{equation*}
\overline{\sigma}^2=\mathbb{E}[\delta^2],\quad \underline{\sigma}^2=-\mathbb{E}[-\delta^2].
\end{equation*}
Specifically, because $\delta$ has no mean-uncertainty (we claimed it in Remark \ref{Remark:G-normal} in Section \ref{subsec2.1}), there are $\mathbb{E}[Z]=\overline{\mu}$ and $-\mathbb{E}[-Z]=\underline{\mu}$.

We consider the binary input signal $X\in \{x_A,x_B\} \subset \mathbb{R}$, $x_A>x_B$, and define the lower bound of SNR ($\underline{\mathrm{SNR}}$) and the upper bound of SNR ($\overline{\mathrm{SNR}}$) as follow. Let $\boldsymbol{M}=(M,\overline{M})$, where $\overline{M}$ is an independent copy of $M$, and define
\begin{equation}
\underline{\mathrm{SNR}}(\boldsymbol{M})=\frac{(x_A+M-x_B-\overline{M})^2}{8\overline{\sigma}^2},
\end{equation}
\begin{equation}
\overline{\mathrm{SNR}}(\boldsymbol{M})=\frac{(x_A+M-x_B-\overline{M})^2}{8\underline{\sigma}^2}.
\end{equation}
Then, the lower bound of SNR is characterized as
\begin{equation}
\begin{split}
&\underline{\mathrm{SNR}}=\mathcal{E}[\underline{\mathrm{SNR}}(\boldsymbol{M})]\\
=&\left\{
\begin{array}{ll}
\frac{(x_A+\underline{\mu}-x_B-\overline{\mu})^2}{8\overline{\sigma}^2}  &\quad \overline{\mu}-\underline{\mu} < x_A-x_B  \\
0  &\quad \overline{\mu}-\underline{\mu} \geq x_A-x_B
\end{array}
\right. ,
\end{split}
\end{equation}
and the upper bound of SNR is characterized as
\begin{equation}
\overline{\mathrm{SNR}}=\mathbb{E}[\overline{\mathrm{SNR}}(\boldsymbol{M})]=\frac{(x_A+\overline{\mu}-x_B-\underline{\mu})^2}{8\underline{\sigma}^2}.
\end{equation}
The SNR is defined as
\begin{equation}\label{definition:SNR}
\mathrm{SNR}=\frac{\underline{\mathrm{SNR}}+\overline{\mathrm{SNR}}}{2}.
\end{equation}

\begin{remark}
	In wireless communication, the SNR is generally considered as the ratio of average received signal energy to noise energy, and energy is generally regarded as variance. Affected by the uncertainties in both the mean and variance, for the ADUN channel, we are unable to determine the range of variance uncertainty for the noise $Z$. Therefore, we are not capable of defining the SNR directly. However, since $Y=X+Z$ is equivalent to $Y=X+M+\delta$, then the random variable $M$ can be seen as part of the transmitted signal, we can redefine the channel noise as $\delta$ and thus define the SNR. Note that treating $M$ as part of the transmitted signal is merely an idealized theoretical approach.
\end{remark}

For a given $X=x$, affected by the inherent uncertainty of the probability measures $\mathcal{P}_{\Theta}=\{P_{\theta}\}_{\theta \in \Theta}$, the performance of the channel is closely related to the family of probability distributions $\{P_{\theta}(Y\leq y)|P_{\theta}\in\mathcal{P}_{\Theta}\}$. To handle the uncertainty of probability distributions, the most ideal approach typically involves calculate all members in the set of possible probability distributions. However, since this family of probability measures is entirely unknown and the only thing we know is its relationship with sublinear expectation, not all probability distributions from this set can be determined. Nevertheless, based on the properties of maximum distribution and G-normal distribution, the supremum and infimum values of the set, i.e., $\underset{P_{\theta}\in\mathcal{P}_{\Theta}}{\sup}P_{\theta}(Y\leq y)$ and $\underset{P_{\theta}\in\mathcal{P}_{\Theta}}{\inf} P_{\theta}(Y\leq y)$ have explicit expressions, as demonstrated by the following theorem.

\begin{theorem}\label{theorem:AGDN1}
	For the ADUN channel $Y=X+Z$, where $Z=M+\delta$, $M \overset{d}{=} M_{[\underline{\mu},\overline{\mu}]}$, $\delta \overset{d}{=} N(0,[\underline{\sigma}^2,\overline{\sigma}^2])$, $\delta$ is indepndent of $M$, and $X=x$, the supremum and infimum values of the uncertain family of probabilities of $\{Y \leq y \}$ are given by $\overline{P}(y)$ and $\underline{P}(y)$, respectively. The explicit expressions are as follows:
	\begin{equation}\label{eq:AGNNmean upper}
	\overline{P}(y)=\left\{
	\begin{array}{lll}
	\frac{2\overline{\sigma}}{\overline{\sigma}+\underline{\sigma}}Q(\frac{\underline{\mu }+x-y}{\overline{\sigma}})  &\quad y\leq \underline{\mu}+x  \\
	\quad \\
	\frac{\overline{\sigma}-\underline{\sigma}}{\overline{\sigma}+\underline{\sigma}}+\frac{2\underline{\sigma}}{\overline{\sigma}+\underline{\sigma}}Q(\frac{\underline{\mu}+x-y}{\underline{\sigma}})  &\quad y>\underline{\mu}+x
	\end{array}
	\right. ,
	\end{equation}
	\begin{equation}\label{eq:AGNNmean lower}
	\underline{P}(y)=\left\{
	\begin{array}{lll}
	\frac{2\underline{\sigma}}{\overline{\sigma}+\underline{\sigma}}-\frac{2\underline{\sigma}}{\overline{\sigma}+\underline{\sigma}}Q(\frac{y-x-\overline{\mu }}{\underline{\sigma}})  &\quad y\leq x+\overline{\mu}  \\
	\quad \\
	1-\frac{2\overline{\sigma}}{\overline{\sigma}+\underline{\sigma}}Q(\frac{y-x-\overline{\mu}}{\overline{\sigma}})  &\quad y>x+\overline{\mu}
	\end{array}
	\right. .
	\end{equation}
\end{theorem}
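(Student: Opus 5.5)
The plan is to reduce the computation to two ingredients: the independence of $\delta$ from $M$ under $\mathbb{E}$, and the known closed form of the upper probability of a half-line under a G-normal distribution.

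\textbf{Reduction to a G-normal upper probability.} We identify $\overline{P}(y)=\mathbb{E}[\mathbf{1}_{\{Y\le y\}}]$ and $\underline{P}(y)=-\mathbb{E}[-\mathbf{1}_{\{Y\le y\}}]=1-\mathbb{E}[\mathbf{1}_{\{Y>y\}}]$, using the representation theorem for sublinear expectations. With $X=x$ fixed, $\mathbf{1}_{\{Y\le y\}}=\phi(M,\delta)$ with $\phi(m,d)=\mathbf{1}_{\{d\le y-x-m\}}$, which is bounded and measurable. By the Remark after Definition~\ref{definition:AGDN}, independence in the sense of Definition~\ref{definition:independent} applies to indicators, with $M$ in the role of $X$ in (\ref{eq:independent}). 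Hence
\[
\mathbb{E}[\mathbf{1}_{\{Y\le y\}}]=\mathbb{E}\big[g(M)\big],\qquad g(m):=\mathbb{E}[\mathbf{1}_{\{\delta\le y-x-m\}}].
\]
The function $g$ is nonincreasing in $m$. We then use the maximal distribution (Definition~\ref{maximal distribution}) to get $\mathbb{E}[g(M)]=\sup_{m\in[\underline{\mu},\overline{\mu}]}g(m)=g(\underline{\mu})$. Strictly, Definition~\ref{maximal distribution} is stated for $C_{l,lip}$ functions; for a monotone $g$ I would justify the extension by approximating $g$ from above and below with Lipschitz monotone functions. Alternatively, it suffices to note that the sup over the family is attained at the Dirac measure $\delta_{\underline{\mu}}$. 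The lower probability is handled the same way: $\mathbb{E}[\mathbf{1}_{\{Y>y\}}]=\sup_m \mathbb{E}[\mathbf{1}_{\{\delta>y-x-m\}}]$, which is attained at $m=\overline{\mu}$.

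\textbf{The G-normal half-line probability.} This is the main step. We need $F(c):=\mathbb{E}[\mathbf{1}_{\{\delta\le c\}}]$ for $\delta\overset{d}{=}N(0,[\underline{\sigma}^2,\overline{\sigma}^2])$. For this I would use Theorem~\ref{G-normal and PDE-1}. For the monotone step function $\varphi=\mathbf{1}_{(-\infty,c]}$, the solution of $\partial_t u=G(\partial_{xx}u)$ with this data is self-similar, $u(t,x)=f\big((c-x)/\sqrt t\big)$. It solves the ODE $-\tfrac12 z f'(z)=G(f''(z))$, which uses $\overline{\sigma}^2$ where $f''>0$ and $\underline{\sigma}^2$ where $f''<0$. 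An increasing solution with $f(-\infty)=0$ and $f(\infty)=1$ is convex for $z<0$ and concave for $z>0$. This gives $f(z)=A\,\Phi(z/\overline{\sigma})$ for $z\le0$ and $f(z)=1-B\,Q(z/\underline{\sigma})$ for $z>0$. $C^1$-matching at $0$ gives $A/2=1-B/2$ and $A/\overline{\sigma}=B/\underline{\sigma}$, so
\[
A=\frac{2\overline{\sigma}}{\overline{\sigma}+\underline{\sigma}},\qquad B=\frac{2\underline{\sigma}}{\overline{\sigma}+\underline{\sigma}}.
\]
This is the classical formula of Peng for the G-normal distribution function. I would cite it, or verify the viscosity property directly, approximating the indicator by Lipschitz functions and passing to the limit at continuity points, since $F$ is continuous.

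\textbf{Assembly.} Set $c=y-x-\underline{\mu}$. Then $\overline{P}(y)=F(c)$, and using $\Phi(-s)=Q(s)$ this gives exactly (\ref{eq:AGNNmean upper}) in the two cases $y\le\underline{\mu}+x$ and $y>\underline{\mu}+x$. For the lower probability, $\mathbb{E}[\mathbf{1}_{\{\delta>c'\}}]$ with $c'=y-x-\overline{\mu}$ equals $\mathbb{E}[\mathbf{1}_{\{-\delta<-c'\}}]$. Since $-\delta$ has the same G-normal law by symmetry (the generator $G$ is unchanged), this equals $F(-c')$. Therefore $\underline{P}(y)=1-F(-c')$, and substituting the two branches yields (\ref{eq:AGNNmean lower}).

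The expected obstacle is the rigorous handling of discontinuous test functions, both in the maximal-distribution step and in the PDE identification of $F$. I would deal with it through monotone Lipschitz approximations together with the continuity of the limiting expressions.
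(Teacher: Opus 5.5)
Your proposal is correct and follows the paper's proof step for step. Independence reduces $\mathbb{E}[\mathbf{1}_{\{Y\le y\}}]$ to $\mathbb{E}[g(M)]$, the maximal distribution evaluates this at $\underline{\mu}$ (and at $\overline{\mu}$ for the tail), and $\underline{P}(y)=1-\mathbb{E}[\mathbf{1}_{\{Y>y\}}]$. The only differences are that the paper cites Corollary 2 of \cite{peng2020hypothesis} for the G-normal half-line probability and reads the tail off the same integral representation, whereas you derive that probability from the self-similar solution of the G-heat equation and use $-\delta\overset{d}{=}\delta$ for the tail; this is a valid, somewhat more self-contained route.
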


{\bf Proof}\ \ Denote by $\mathcal{P}_{\Theta}=\{P_{\theta}\}_{\theta \in \Theta}$ the uncertain family of probability measures, and $\mathbb{E}[\cdot]:=\underset{P_{\theta}\in\mathcal{P}_{\Theta}}{\sup}E_{P_{\theta}}[\cdot]$ is the sublinear expectation, $\mathcal{E}[\cdot]=\underset{P_{\theta}\in\mathcal{P}_{\Theta}}{\inf}E_{P_{\theta}}[\cdot]$ is the conjugate expectation of $\mathbb{E}$. For a given $X=x$, the channel output $Y$ is equal to $x+M+\delta$. Therefore, we have
\begin{equation}
\overline{P}(y)=\underset{P_{\theta}\in\mathcal{P}_{\Theta}}{\sup}P_{\theta}(Y\leq y)=\mathbb{E}[I_{\{x+M+\delta \leq y \}}],
\end{equation}
\begin{equation}
\underline{P}(y)=\underset{P_{\theta}\in\mathcal{P}_{\Theta}}{\inf} P_{\theta}(Y\leq y)=\mathcal{E}[I_{\{x+M+\delta \leq y \}}].
\end{equation}

Since $\delta$ is independent of $M$, by Definition \ref{definition:independent}, there is
\begin{equation}
\mathbb{E}[I_{\{M+\delta\leq y-x\}}]=\mathbb{E}[\mathbb{E}[I_{\{m+\delta\leq y-x\}}]|m=M].
\end{equation}
Then, based on Definition \ref{maximal distribution} and Corollary 2 in \cite{peng2020hypothesis}, we obtain
\begin{equation}
\begin{split}
\mathbb{E}[I_{\{M+\delta\leq y-x\}}]=&\mathbb{E}[\frac{\sqrt{2}}{\overline{\sigma}+\underline{\sigma}}\int_{M+x-y}^{\infty}\frac{1}{\sqrt{\pi}}\left(e^{-\frac{z^2}{2\overline{\sigma}^2}}I_{\{z\geq 0\}} + e^{-\frac{z^2}{2\underline{\sigma}^2}}I_{\{z< 0\}}\right) dz]\\
=&\frac{\sqrt{2}}{\overline{\sigma}+\underline{\sigma}}\int_{\underline{\mu}+x-y}^{\infty}\frac{1}{\sqrt{\pi}}\left(e^{-\frac{z^2}{2\overline{\sigma}^2}}I_{\{z\geq 0\}} + e^{-\frac{z^2}{2\underline{\sigma}^2}}I_{\{z < 0\}}\right) dz\\
=&\left\{
\begin{array}{lll}
\frac{2\overline{\sigma}}{\overline{\sigma}+\underline{\sigma}}Q(\frac{\underline{\mu }+x-y}{\overline{\sigma}})  &\quad y\leq \underline{\mu}+x  \\
\quad \\
\frac{\overline{\sigma}-\underline{\sigma}}{\overline{\sigma}+\underline{\sigma}}+\frac{2\underline{\sigma}}{\overline{\sigma}+\underline{\sigma}}Q(\frac{\underline{\mu}+x-y}{\underline{\sigma}})  &\quad y>\underline{\mu}+x
\end{array}
\right. .
\end{split}
\end{equation}
Here we obtain the expression of $\overline{P}(y)$.

Similarly, we can get
\begin{equation}
\begin{split}
\mathbb{E}[I_{\{M+\delta> y-x\}}]=&\mathbb{E}[\frac{\sqrt{2}}{\overline{\sigma}+\underline{\sigma}}\int_{y-x-M}^{\infty}\frac{1}{\sqrt{\pi}}\left( e^{-\frac{z^2}{2\overline{\sigma}^2}}I_{\{z\geq 0\}} + e^{-\frac{z^2}{2\underline{\sigma}^2}}I_{\{z < 0\}} \right) dz]\\
=&\frac{\sqrt{2}}{\overline{\sigma}+\underline{\sigma}}\int_{y-x-\overline{\mu}}^{\infty}\frac{1}{\sqrt{\pi}}\left(e^{-\frac{z^2}{2\overline{\sigma}^2}}I_{\{z \geq 0\}} + e^{-\frac{z^2}{2\underline{\sigma}^2}}I_{\{z < 0\}} \right) dz\\
=&\left\{
\begin{array}{lll}
\frac{\overline{\sigma}-\underline{\sigma}}{\overline{\sigma}+\underline{\sigma}}+\frac{2\underline{\sigma}}{\overline{\sigma}+\underline{\sigma}}Q(\frac{y-x-\overline{\mu}}{\underline{\sigma}})  &\quad y \leq x+\overline{\mu}  \\
\quad \\
\frac{2\overline{\sigma}}{\overline{\sigma}+\underline{\sigma}}Q(\frac{y-x-\overline{\mu}}{\overline{\sigma}})  &\quad y > x+\overline{\mu}
\end{array}
\right. ,
\end{split}
\end{equation}
Then, according to $\mathcal{E}[I_{\{x+M+\delta \leq y \}}]=1-\mathbb{E}[I_{\{M+\delta> y-x\}}]$, we can get the expression of $\underline{P}(y)$. This conclude the proof. $\hfill\blacksquare$

\begin{corollary}\label{corollary:AGDN1}
	For the ADUN channel $Y=X+Z$, where $Z=M+\delta$, $M \overset{d}{=} M_{[\underline{\mu},\overline{\mu}]}$, $\delta \overset{d}{=} N(0,[\underline{\sigma}^2,\overline{\sigma}^2])$, $\delta$ is indepndent of $M$, and $X=x$, the supremum and infimum values of the uncertain family of probabilities of $\{Y > y \}$ are given by $\overline{T}(y)$ and $\underline{T}(y)$, respectively. The explicit expressions are as follows:
	\begin{equation}\label{eq:AGNNmean tail upper}
	\overline{T}(y)=\left\{
	\begin{array}{lll}
	\frac{\overline{\sigma}-\underline{\sigma}}{\overline{\sigma}+\underline{\sigma}}+\frac{2\underline{\sigma}}{\overline{\sigma}+\underline{\sigma}}Q(\frac{y-x-\overline{\mu}}{\underline{\sigma}})  &\quad y \leq x+\overline{\mu}  \\
	\quad \\
	\frac{2\overline{\sigma}}{\overline{\sigma}+\underline{\sigma}}Q(\frac{y-x-\overline{\mu}}{\overline{\sigma}})  &\quad y > x+\overline{\mu}
	\end{array}
	\right. ,
	\end{equation}
	\begin{equation}\label{eq:AGNNmean tail lower}
	\underline{T}(y)=\left\{
	\begin{array}{lll}
	1-\frac{2\overline{\sigma}}{\overline{\sigma}+\underline{\sigma}}Q(\frac{\underline{\mu}+x-y}{\overline{\sigma}})  &\quad y\leq \underline{\mu}+x  \\
	\quad \\
	\frac{2\underline{\sigma}}{\overline{\sigma}+\underline{\sigma}}-\frac{2\underline{\sigma}}{\overline{\sigma}+\underline{\sigma}}Q(\frac{\underline{\mu}+x-y}{\underline{\sigma}})  &\quad y>\underline{\mu}+x
	\end{array}
	\right. .
	\end{equation}
\end{corollary}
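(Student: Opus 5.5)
The corollary follows from Theorem \ref{theorem:AGDN1} by complementation. The one fact needed is the duality between the sublinear expectation and its conjugate on complementary indicators: since $I_{\{Y>y\}}=1-I_{\{Y\leq y\}}$ and each $P_\theta$ is a probability measure,
\begin{equation*}
\overline{T}(y)=\sup_{P_\theta\in\mathcal{P}_\Theta}P_\theta(Y>y)=1-\inf_{P_\theta\in\mathcal{P}_\Theta}P_\theta(Y\leq y)=1-\underline{P}(y),
\end{equation*}
and in the same way $\underline{T}(y)=1-\overline{P}(y)$. Equivalently, in the paper's notation, $\mathbb{E}[1-I]=1-\mathcal{E}[I]$ and $\mathcal{E}[1-I]=1-\mathbb{E}[I]$. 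These follow from constant preservation together with the fact that adding a constant passes through $\mathbb{E}$. That last property follows from sub-additivity applied in both directions, or directly from the representation as a supremum of linear functionals.

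\textbf{Step 1: $\overline{T}(y)$.} Substituting the expression for $\underline{P}(y)$ gives two cases.

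If $y\leq x+\overline{\mu}$, then
\begin{equation*}
1-\frac{2\underline{\sigma}}{\overline{\sigma}+\underline{\sigma}}+\frac{2\underline{\sigma}}{\overline{\sigma}+\underline{\sigma}}Q\Big(\frac{y-x-\overline{\mu}}{\underline{\sigma}}\Big)
=\frac{\overline{\sigma}-\underline{\sigma}}{\overline{\sigma}+\underline{\sigma}}+\frac{2\underline{\sigma}}{\overline{\sigma}+\underline{\sigma}}Q\Big(\frac{y-x-\overline{\mu}}{\underline{\sigma}}\Big).
\end{equation*}

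If $y>x+\overline{\mu}$, the expression simplifies directly to $\frac{2\overline{\sigma}}{\overline{\sigma}+\underline{\sigma}}Q\big(\frac{y-x-\overline{\mu}}{\overline{\sigma}}\big)$.

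Both cases agree with the formula for $\overline{T}(y)$ in the statement. They also coincide with the intermediate computation of $\mathbb{E}[I_{\{M+\delta>y-x\}}]$ already carried out in the proof of Theorem \ref{theorem:AGDN1}, so that display can be cited directly.

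\textbf{Step 2: $\underline{T}(y)$.} Substituting the expression for $\overline{P}(y)$ gives two cases.

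If $y\leq\underline{\mu}+x$, the result is $1-\frac{2\overline{\sigma}}{\overline{\sigma}+\underline{\sigma}}Q\big(\frac{\underline{\mu}+x-y}{\overline{\sigma}}\big)$.

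If $y>\underline{\mu}+x$, then
\begin{equation*}
1-\frac{\overline{\sigma}-\underline{\sigma}}{\overline{\sigma}+\underline{\sigma}}-\frac{2\underline{\sigma}}{\overline{\sigma}+\underline{\sigma}}Q\Big(\frac{\underline{\mu}+x-y}{\underline{\sigma}}\Big)
=\frac{2\underline{\sigma}}{\overline{\sigma}+\underline{\sigma}}-\frac{2\underline{\sigma}}{\overline{\sigma}+\underline{\sigma}}Q\Big(\frac{\underline{\mu}+x-y}{\underline{\sigma}}\Big).
\end{equation*}

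Both cases match the formula for $\underline{T}(y)$ in the statement.

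The only point needing care is justifying the complement identities for indicator functions, and keeping the case boundaries straight. The boundaries carry over unchanged: $\overline{T}$ inherits the boundary $x+\overline{\mu}$ from $\underline{P}$, and $\underline{T}$ inherits the boundary $x+\underline{\mu}$ from $\overline{P}$. The remaining algebra is routine.
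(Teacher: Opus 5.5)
Your proposal is correct and is essentially the paper's argument. The paper writes $\overline{T}(y)=\mathbb{E}[I_{\{M+\delta+x>y\}}]$ and $\underline{T}(y)=\mathcal{E}[I_{\{M+\delta+x>y\}}]$ and defers to the proof of Theorem~\ref{theorem:AGDN1}, where $\mathbb{E}[I_{\{M+\delta>y-x\}}]$ is already computed and the identity $\mathcal{E}[I_{\{\cdot\le\cdot\}}]=1-\mathbb{E}[I_{\{\cdot>\cdot\}}]$ is already used; your relations $\overline{T}=1-\underline{P}$ and $\underline{T}=1-\overline{P}$ just make that deferral explicit, and your case algebra and case boundaries check out.
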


{\bf Proof}\ \  There is
\begin{equation}
\overline{T}(y)=\sup_{P_{\theta}\in\mathcal{P}_{\Theta}} P_{\theta}(Y > y)=\mathbb{E}[I_{\{M+\delta+x > y\}}],
\end{equation}
\begin{equation}
\underline{T}(y)=\inf_{P_{\theta}\in\mathcal{P}_{\Theta}} P_{\theta}(Y > y)=\mathcal{E}[I_{\{M+\delta+x > y\}}].
\end{equation}
Similar to the proof of Theorem \ref{theorem:AGDN1}, we can draw the conclusions. $\hfill\blacksquare$

\begin{remark}
	If $\overline{\mu}=\underline{\mu}=\mu$, $\overline{\sigma}=\underline{\sigma}=\sigma$, the conclusions in Theorem \ref{theorem:AGDN1} and Corollary \ref{corollary:AGDN1} degenerate into the traditional result about AWGN channel.
\end{remark}

\begin{definition}
	For a given family of decision rules, the \textit{optimal detector} is the rule that simultaneously minimizes both the maximum and minimum probabilities of error occurrence.
\end{definition}

The ``maximum probability'' and ``minimum probability'' refer to the maximum and minimum values among all possible probability values for the occurrence of error events, and can be intuitively understood as corresponding to conservative and aggressive strategies, respectively.

As we mentioned in Section \ref{subsec2.2}, for the real AWGN channel in classical information theory, if the two symbols $x_A,x_B$ are equally likely to have been transmitted, the optimal detector, with the least probability of making an erroneous decision, is known as minimum distance detector, and its analyses relies on the density function of normal probability distribution. In the context of nonlinear expectation theory, for ADUN channel, we still consider the transmitted signal $X$, which has an equal chance of being either $x_A$ or $x_B$, $x_A>x_B$. Since the transmitted signal is binary, we focus our attention on the family of threshold decision rules of the form: choose $x_A$ when the received signal $Y > y_0$ and $x_B$ otherwise, where $y_0$ is a threshold value to be determined. The optimal detector within this family is referred to as the optimal threshold detector. Then we have the following theorem, which illustrates a strong correlation between the optimal detector and the range of uncertainty in the noise mean. This theorem specifies two critical results: firstly, the optimal threshold detector exists if and only if the condition $\overline{\mu}-\underline{\mu} < x_A-x_B$ is satisfied; and secondly, the optimal threshold detector that exists under this condition is characterized as choosing $x_A$ when the received signal $Y > \frac{x_A+x_B+\overline{\mu}+\underline{\mu}}{2}$ and $x_B$ otherwise.

\begin{theorem}\label{theorem:AGNNmean detector}
	Consider the ADUN channel $Y=X+Z$, where $Z=M+\delta$, $M \overset{d}{=} M_{[\underline{\mu},\overline{\mu}]}$, $\delta \overset{d}{=} N(0,[\underline{\sigma}^2,\overline{\sigma}^2])$, $\delta$ is indepndent of $M$, and the channel input signal $X$ is equally likely to take on the values $x_A$ or $x_B$, $x_A>x_B$. Then, the optimal threshold detector exists just in case $\overline{\mu}-\underline{\mu} < x_A-x_B$. Under this case, the optimal threshold detector is choosing $x_A$ when the received signal $Y > \frac{x_A+x_B+\overline{\mu}+\underline{\mu}}{2}$ and $x_B$ otherwise, and the corresponding maximum probability (denoted as $\overline{P_e}$) and minimum probability (denoted as $\underline{P_e}$) of error occurrence are
	\begin{equation}\label{eq:sup expression}
	\overline{P_e} = \frac{2\overline{\sigma}}{\overline{\sigma}+\underline{\sigma}}Q(\frac{x_A+\underline{\mu}-x_B-\overline{\mu}}{2\overline{\sigma}}),
	\end{equation}
	\begin{equation}\label{eq:inf expression}
	\underline{P_e} = \frac{2\underline{\sigma}}{\overline{\sigma}+\underline{\sigma}}Q(\frac{x_A+\overline{\mu}-x_B-\underline{\mu}}{2\underline{\sigma}}).
	\end{equation}
\end{theorem}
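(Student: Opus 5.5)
The plan is to express the two error probabilities of a threshold rule through Theorem \ref{theorem:AGDN1} and Corollary \ref{corollary:AGDN1}, and then minimize each over the threshold $y_0$ by a derivative analysis. For the rule ``choose $x_A$ iff $Y>y_0$'', an error occurs when $X=x_A$ and $Y\le y_0$, or when $X=x_B$ and $Y>y_0$. Using equiprobable inputs and the independence of $X$ and $Z$, I would write the extremal error probabilities as
\begin{equation*}
\overline{P_e}(y_0)=\tfrac12\bigl[\overline{P}_{x_A}(y_0)+\overline{T}_{x_B}(y_0)\bigr],\qquad
\underline{P_e}(y_0)=\tfrac12\bigl[\underline{P}_{x_A}(y_0)+\underline{T}_{x_B}(y_0)\bigr].
\end{equation*}
Here the subscripts indicate the input $x$ substituted into the explicit formulas. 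I expect this identification, i.e.\ that the sup of the averaged error equals the average of the two sups, to be the main modelling obstacle. I would justify it by noting that under the representation $\mathbb{E}=\sup_\theta E_{P_\theta}$ the noise law can be chosen separately in each conditional branch, since $X$ is fixed in each term.

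Next I would differentiate in $y_0$. From the integral representation in the proof of Theorem \ref{theorem:AGDN1}, $\overline{P}_{x_A}$ has derivative $k(y_0-a)$, where $a=x_A+\underline{\mu}$ and
\begin{equation*}
k(u)=\frac{\sqrt2}{\sqrt\pi(\overline{\sigma}+\underline{\sigma})}\Bigl(e^{-u^2/(2\overline{\sigma}^2)}I_{\{u\le 0\}}+e^{-u^2/(2\underline{\sigma}^2)}I_{\{u>0\}}\Bigr).
\end{equation*}
Similarly $\overline{T}_{x_B}$ has derivative $-k(b-y_0)$ with $b=x_B+\overline{\mu}$. So $\overline{P_e}'(y_0)=\tfrac12[k(y_0-a)-k(b-y_0)]$. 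The lower functions behave the same way but with the roles of the two Gaussian branches swapped, and with centers $a'=x_A+\overline{\mu}$ and $b'=x_B+\underline{\mu}$. Since $a'>b'$ always holds, a sign check shows that $\underline{P_e}$ decreases up to the midpoint $(a'+b')/2$ and increases afterwards. Hence $\underline{P_e}$ is always uniquely minimized at $y^*=\frac{x_A+x_B+\overline{\mu}+\underline{\mu}}{2}$.

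For $\overline{P_e}$ I would split into two cases according to the sign of $a-b=(x_A-x_B)-(\overline{\mu}-\underline{\mu})$.
\begin{itemize}
\item If $a>b$, then for $y_0\in(b,a)$ both kernels use the $\overline{\sigma}$ branch. The derivative vanishes exactly at $y_0=(a+b)/2=y^*$, and it is negative to the left of $y^*$ and positive to the right, by comparing the distances $|y_0-a|$ and $|y_0-b|$ within and outside $(b,a)$. So $y^*$ is also the unique minimizer of $\overline{P_e}$, and the optimal threshold detector exists.
\item If $a\le b$, the stationary point $y^*$ sits where both kernels use the $\underline{\sigma}$ branch. The derivative changes sign from positive to negative there, so $y^*$ is a local maximum of $\overline{P_e}$. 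The infimum is then approached at $y_0\to\pm\infty$ or attained elsewhere, but not at $y^*$. Consequently no single threshold minimizes both $\overline{P_e}$ and $\underline{P_e}$, which proves the ``only if'' direction.
\end{itemize}
The boundary case $a=b$ and the degenerate case $\overline{\sigma}=\underline{\sigma}$ need a separate remark. There I would compare $\overline{P_e}(y^*)$ with the limiting values directly.

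Finally, I would substitute $y_0=y^*$ into the formulas. In the upper case, $y^*\le a$ and $y^*>b$, at common distance $d=\frac{x_A+\underline{\mu}-x_B-\overline{\mu}}{2}$ from both centers. Both terms then equal $\frac{2\overline{\sigma}}{\overline{\sigma}+\underline{\sigma}}Q(d/\overline{\sigma})$, which gives (\ref{eq:sup expression}). In the lower case, $y^*$ lies at distance $\frac{x_A+\overline{\mu}-x_B-\underline{\mu}}{2}$ on the $\underline{\sigma}$ branches of both functions. Using $1-Q(-v)=Q(v)$, each term reduces to $\frac{2\underline{\sigma}}{\overline{\sigma}+\underline{\sigma}}Q\bigl(\frac{x_A+\overline{\mu}-x_B-\underline{\mu}}{2\underline{\sigma}}\bigr)$, which gives (\ref{eq:inf expression}).
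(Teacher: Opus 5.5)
Your overall route is the paper's: the same decomposition of $\overline{P_e}(y_0)$ and $\underline{P_e}(y_0)$ into half-sums, and the same case split on the sign of $a-b$. Your treatment of $\underline{P_e}$, the local-maximum argument when $a<b$, and the final substitution are correct. The gap is the step where you conclude that $y^*$ is the unique global minimizer of $\overline{P_e}$ when $a>b$. The paper's proof only asserts this (``taking minimum \dots both yield''), and your justification of it fails.

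Outside $(b,a)$ the two kernels are on \emph{different} branches. For $y_0<b$ the sign of $\overline{P_e}'(y_0)$ is the sign of $e^{-(a-y_0)^2/(2\overline{\sigma}^2)}-e^{-(b-y_0)^2/(2\underline{\sigma}^2)}$. Here the larger distance $a-y_0$ is divided by the larger $\overline{\sigma}$, so comparing distances decides nothing. Working it out for $\overline{\sigma}>\underline{\sigma}$:
\begin{itemize}
\item $\overline{P_e}'(y_0)>0$ exactly when $b-y_0>\frac{(a-b)\underline{\sigma}}{\overline{\sigma}-\underline{\sigma}}$;
\item symmetrically, $\overline{P_e}'(y_0)<0$ once $y_0-a>\frac{(a-b)\underline{\sigma}}{\overline{\sigma}-\underline{\sigma}}$.
\end{itemize}
So $\overline{P_e}$ has two local maxima on either side of $y^*$. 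It then decreases towards $\lim_{y_0\to\pm\infty}\overline{P_e}(y_0)=\frac12$, the value of the trivial rules ``always decide $x_A$'' and ``always decide $x_B$''. Hence $y^*$ is only a local minimizer of $\overline{P_e}$. It is the global minimizer if and only if
\begin{equation*}
\overline{P_e}(y^*)=\frac{2\overline{\sigma}}{\overline{\sigma}+\underline{\sigma}}\,Q\Bigl(\frac{x_A+\underline{\mu}-x_B-\overline{\mu}}{2\overline{\sigma}}\Bigr)\le\frac12 .
\end{equation*}

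This condition does not follow from $\overline{\mu}-\underline{\mu}<x_A-x_B$. As $x_A+\underline{\mu}-x_B-\overline{\mu}\downarrow 0$, the left side tends to $\frac{\overline{\sigma}}{\overline{\sigma}+\underline{\sigma}}$, which exceeds $\frac12$ whenever $\overline{\sigma}>\underline{\sigma}$. A concrete case: $\overline{\sigma}=1$, $\underline{\sigma}=\frac12$, $x_A+\underline{\mu}-x_B-\overline{\mu}=0.2$ gives $\overline{P_e}(y^*)=\frac43 Q(0.1)\approx 0.61$. Thresholds far from $y^*$ give $\overline{P_e}$ arbitrarily close to $\frac12$, but that infimum is never attained. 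By the same criterion the paper uses in its Case 2, no optimal threshold detector exists there.

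So this step cannot be repaired within the stated hypotheses. The ``if'' direction of the statement holds only with the additional condition displayed above. For $\overline{\sigma}>\underline{\sigma}$ that condition by itself already forces $\overline{\mu}-\underline{\mu}<x_A-x_B$. The comparison of $\overline{P_e}(y^*)$ with the limiting value $\frac12$, which you reserved for the boundary cases, is exactly what the main case needs.
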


{\bf Proof}\ \  Denote by $\mathcal{P}_{\Theta}=\{P_{\theta}\}_{\theta \in \Theta}$ the uncertain family of probability measures, and $\mathbb{E}[\cdot]:=\underset{P_{\theta}\in\mathcal{P}_{\Theta}}{\sup}E_{P_{\theta}}[\cdot]$ is the sublinear expectation. Suppose the detector is choosing $x_A$ for $Y > y_0$, and $x_B$ for $Y \leq y_0$. We denote the error event as $\{\mathrm{error|y_0} \}$. Then,
\begin{equation}
\overline{P_e}(y_0):=\sup_{P_{\theta}\in\mathcal{P}_{\Theta}} P_{\theta}(\{error|y_0\})=\mathbb{E}[I_{ \{error|y_0\} }],
\end{equation}
\begin{equation}
\underline{P_e}(y_0):=\inf_{P_{\theta}\in\mathcal{P}_{\Theta}} P_{\theta}(\{error|y_0\})=\mathcal{E}[I_{ \{error|y_0\} }].
\end{equation}

Because $X$ is equal to $x_A$ or $x_B$ equiprobability, $Z$ is independent of $X$ under sublinear expectation $\mathbb{E}$, we have
\begin{equation}
\mathbb{E}[I_{ \{error|y_0\} }]	=\frac{1}{2}\sup_{P_{\theta}\in\mathcal{P}_{\Theta}} P_{\theta}(Y \leq y_0|X=x_A) + \frac{1}{2}\sup_{P_{\theta}\in\mathcal{P}_{\Theta}} P_{\theta}(Y > y_0|X=x_B),
\end{equation}
\begin{equation}
\mathcal{E}[I_{ \{error|y_0\} }] = \frac{1}{2}\inf_{P_{\theta}\in\mathcal{P}_{\Theta}} P_{\theta}(Y \leq y_0|X=x_A) + \frac{1}{2}\inf_{P_{\theta}\in\mathcal{P}_{\Theta}} P_{\theta}(Y > y_0|X=x_B).
\end{equation}

Case 1. If $\overline{\mu}-\underline{\mu} < x_A-x_B$, by combining eqs. (\ref{eq:AGNNmean upper}), (\ref{eq:AGNNmean lower}), (\ref{eq:AGNNmean tail upper}) and (\ref{eq:AGNNmean tail lower}), we can obtain
\begin{equation}\label{eq:sup}
\begin{split}
&\overline{P_e}(y_0):= \sup_{P_{\theta}\in\mathcal{P}_{\Theta}} P_{\theta}(\{error|y_0\}) \\
=&\left\{
\begin{array}{lllll}
\frac{\overline{\sigma}-\underline{\sigma}}{2(\overline{\sigma}+\underline{\sigma})}+\frac{\underline{\sigma}}{\overline{\sigma}+\underline{\sigma}}Q(\frac{\underline{\mu}+x_A-y_0}{\underline{\sigma}})+\frac{\overline{\sigma}}{\overline{\sigma}+\underline{\sigma}}Q(\frac{y_0-x_B-\overline{\mu}}{\overline{\sigma}})  &\quad y_0>x_A+\underline{\mu}  \\
\quad \\
\frac{\overline{\sigma}}{\overline{\sigma}+\underline{\sigma}}Q(\frac{\underline{\mu}+x_A-y_0}{\overline{\sigma}})+\frac{\overline{\sigma}}{\overline{\sigma}+\underline{\sigma}}Q(\frac{y_0-x_B-\overline{\mu}}{\overline{\sigma}})  &\quad x_B+\overline{\mu} < y_0 \leq x_A+\underline{\mu}\\
\quad \\
\frac{\overline{\sigma}-\underline{\sigma}}{2(\overline{\sigma}+\underline{\sigma})}+\frac{\overline{\sigma}}{\overline{\sigma}+\underline{\sigma}}Q(\frac{\underline{\mu}+x_A-y_0}{\overline{\sigma}})+\frac{\underline{\sigma}}{\overline{\sigma}+\underline{\sigma}}Q(\frac{y_0-x_B-\overline{\mu}}{\underline{\sigma}})  &\quad y_0 \leq x_B+\overline{\mu} 
\end{array}
\right. ,
\end{split}
\end{equation}
and
\begin{equation}\label{eq:inf}
\begin{split}
&\underline{P_e}(y_0):=\inf_{P_{\theta}\in\mathcal{P}_{\Theta}} P_{\theta}(\{error|y_0\}) \\
=&\left\{
\begin{array}{lllll}
\frac{1}{2}+\frac{\underline{\sigma}}{\overline{\sigma}+\underline{\sigma}}-\frac{\overline{\sigma}}{\overline{\sigma}+\underline{\sigma}}Q(\frac{y_0-x_A-\overline{\mu}}{\overline{\sigma}})-\frac{\underline{\sigma}}{\overline{\sigma}+\underline{\sigma}}Q(\frac{\underline{\mu}+x_B-y_0}{\underline{\sigma}})  &\quad y_0>x_A+\underline{\mu}  \\
\quad \\
\frac{2\underline{\sigma}}{\overline{\sigma}+\underline{\sigma}}-\frac{\underline{\sigma}}{\overline{\sigma}+\underline{\sigma}}Q(\frac{y_0-x_A-\overline{\mu}}{\underline{\sigma}})-\frac{\underline{\sigma}}{\overline{\sigma}+\underline{\sigma}}Q(\frac{\underline{\mu}+x_B-y_0}{\underline{\sigma}})  &\quad x_B+\overline{\mu} < y_0 \leq x_A+\underline{\mu} \\
\quad \\
\frac{1}{2}+\frac{\underline{\sigma}}{\overline{\sigma}+\underline{\sigma}}-\frac{\underline{\sigma}}{\overline{\sigma}+\underline{\sigma}}Q(\frac{y_0-x_A-\overline{\mu}}{\underline{\sigma}})-\frac{\overline{\sigma}}{\overline{\sigma}+\underline{\sigma}}Q(\frac{\underline{\mu}+x_B-y_0}{\overline{\sigma}})  &\quad y_0 \leq x_B+\overline{\mu} 
\end{array}
\right. .
\end{split}
\end{equation}

Taking minimum to $\overline{P_e}(y_0)$ and $\underline{P_e}(y_0)$ both yield the result that $y_0$ equals $\frac{x_A+x_B+\overline{\mu}+\underline{\mu}}{2}$. By substituting $y_0 = \frac{x_A+x_B+\overline{\mu}+\underline{\mu}}{2}$ into eqs. (\ref{eq:sup}) and (\ref{eq:inf}), we can derive the expressions for $\overline{P_e}$ and $\underline{P_e}$ as (\ref{eq:sup expression}) and (\ref{eq:inf expression}), respectively.

Case 2. If $\overline{\mu}-\underline{\mu}>x_A-x_B$, $\underline{P_e}(y_0)$ is same as equation (\ref{eq:inf}). Taking minimum to $\underline{P_e}$ lead to $y_0=\frac{x_A+x_B+\overline{\mu}+\underline{\mu}}{2}$. However, there is
\begin{equation}
\overline{P_e}(y_0)=\left\{
\begin{array}{lllll}
\frac{\overline{\sigma}-\underline{\sigma}}{2(\overline{\sigma}+\underline{\sigma})}+\frac{\underline{\sigma}}{\overline{\sigma}+\underline{\sigma}}Q(\frac{\underline{\mu}+x_A-y_0}{\underline{\sigma}})+\frac{\overline{\sigma}}{\overline{\sigma}+\underline{\sigma}}Q(\frac{y_0-x_B-\overline{\mu}}{\overline{\sigma}})  &\quad y_0>x_B+\overline{\mu}  \\
\quad \\
\frac{\overline{\sigma}-\underline{\sigma}}{\overline{\sigma}+\underline{\sigma}}+\frac{\underline{\sigma}}{\overline{\sigma}+\underline{\sigma}}Q(\frac{\underline{\mu}+x_A-y_0}{\underline{\sigma}})+\frac{\underline{\sigma}}{\overline{\sigma}+\underline{\sigma}}Q(\frac{y_0-x_B-\overline{\mu}}{\underline{\sigma}})  &\quad x_A+\underline{\mu} < y_0 \leq x_B+\overline{\mu}\\
\quad \\
\frac{\overline{\sigma}-\underline{\sigma}}{2(\overline{\sigma}+\underline{\sigma})}+\frac{\overline{\sigma}}{\overline{\sigma}+\underline{\sigma}}Q(\frac{\underline{\mu}+x_A-y_0}{\overline{\sigma}})+\frac{\underline{\sigma}}{\overline{\sigma}+\underline{\sigma}}Q(\frac{y_0-x_B-\overline{\mu}}{\underline{\sigma}})  &\quad y_0 \leq x_A+\underline{\mu} 
\end{array}
\right. .
\end{equation}
We can calculate that $\overline{P_e}(y_0)$ is always greater than $\frac{1}{2}$ and the minimum of $\overline{P_e}(y_0)$ is $\frac{1}{2}$ when $y_0$ tends to $-\infty$ or $\infty$. Therefore, the optimal threshold detector does not exist within the parameters of this situation.

This conclude the proof of Theorem \ref{theorem:AGNNmean detector}. $\hfill\blacksquare$

In our analyses, we have proved that the condition $\overline{\mu}-\underline{\mu} < x_A-x_B$ is directly correlated with the existence of an optimal detector, which is crucial for the effective operation of the detection mechanism. Following the estimation of channel parameters using pilot \cite{dong2004optimal}, it becomes necessary to adjust the values of $x_A$ and $x_B$ to ensure that this condition is satisfied. Given that the channel noise is independent of the input signals, these adjustments are feasible. By modifying $x_A$ and $x_B$, we can configure the system parameters to meet the required inequality, thereby enabling the deployment of an optimal detector that enhances the accuracy and reliability of signal detection within the communication channel. If the condition is violated, i.e., $\overline{\mu}-\underline{\mu} \ge x_A-x_B$, the two constellation points are not sufficiently separated relative to the uncertainty range of the noise mean. In this case, the upper and lower error probabilities can no longer be jointly minimized by any single decision rule; in particular, the lower error probability remains appreciable even in the best-case noise realization, and any detector must always tolerate a non-negligible worst-case error floor. The significance of this condition is also reflected in the definition of SNR. If the inequality $\overline{\mu}-\underline{\mu} < x_A-x_B$ is not satisfied, the $\underline{\mathrm{SNR}}$ becomes invalid.

Furthermore, when the transmit power is strictly limited and the existence of an optimal threshold detector cannot be maintained simply by increasing the spacing between the constellation points, reliable communication under such power constraints can also employ mechanisms that do not depend solely on the distance between individual symbols. For example, diversity techniques can be adopted, where replicas of the same information are transmitted over multiple independent channels (e.g., different time slots, frequencies, or antennas) to disperse the impact of noise uncertainty. The essence of this approach is to base the decision on a vector observation composed of multiple symbols, accumulating the distinguishability between signal and noise by increasing the dimensionality of the signal space. In this way, the overall reliability of the system can be sustained even when the per symbol energy can no longer be increased.

In Theorem \ref{theorem:AGNNmean detector}, we derive that the deterministic boundary for the problem of detecting $X$ is given by $\frac{x_A+x_B+\overline{\mu}+\underline{\mu}}{2}$. This finding diverges from the conclusions drawn in classical information theory, as it illustrates the optimal detector is affected by the mean uncertainty of the channel noise and the range of this uncertainty. This conclusion holds significant relevance for processing real-world data, as it accounts for the fact that even a set of samples has an arithmetic mean of zero, the upper mean $\overline{\mu}$ and the lower mean $\underline{\mu}$ of the sample set can differ due to the inherent uncertainty of the probability model itself. This discrepancy is crucial for accurately modeling and interpreting data in practical applications, as it reflects the impact of mean uncertainty on the performance of the optimal detector.

\begin{remark}
	When $\overline{\sigma}=\underline{\sigma}=\sigma$ and $\overline{\mu}=\underline{\mu}=0$, the conclusions in Theorem \ref{theorem:AGNNmean detector} degenerate into the traditional results for that in AWGN channel.
\end{remark}

For ADUN channel, the $\varphi$-max-mean algorithm \cite{peng2017theory} within the nonlinear expectation theory can be employed to estimate the range of mean uncertainty in the noise derectly. However, there is no suitable method available for estimating the range of variance uncertainty in the noise. Consequently, we present below a method for estimating the variance of noise that accounts for both mean and variance uncertainties.

Based on the previous discussion, $\underset{P_{\theta}\in\mathcal{P}_{\Theta}}{\sup} P_{\theta}(Y\leq y_0|X)$ and $\underset{P_{\theta}\in\mathcal{P}_{\Theta}}{\inf} P_{\theta}(Y\leq y_0|X)$ can be considered as function of $\overline{\mu}$, $\underline{\mu}$, $\overline{\sigma}$ and $\underline{\sigma}$. Similar to \cite{ji2023imbalanced}, there hold
\begin{equation}\label{estimate_1}
\mathbb{E}[I_{\{Y\leq y_0\}}-\sup_{P_{\theta}\in\mathcal{P}_{\Theta}} P_{\theta}(Y\leq y_0|X)]=0,
\end{equation}
\begin{equation}\label{estimate_2}
\mathcal{E}[I_{\{Y\leq y_0\}}-\inf_{P_{\theta}\in\mathcal{P}_{\Theta}} P_{\theta}(Y\leq y_0|X)]=0.
\end{equation}
Let $n$ be sample size and $m\leq n$ be constant. By Theorem 24 in \cite{jin2021optimal}, we obtain that the terms
\begin{equation}\label{estimate_3}
\max_{0\leq k \leq n-m}\frac{1}{m}\sum_{l=1}^{m}[I_{\{y_{k+l}\leq y_0\}}-\frac{\sqrt{2}}{\overline{\sigma}+\underline{\sigma}}\int_{\underline{\mu}+x_{k+l}-y_0}^{\infty}\frac{1}{\sqrt{\pi}}\left(e^{-\frac{z^2}{2\overline{\sigma}^2}}I_{\{z\geq 0\}} + e^{-\frac{z^2}{2\underline{\sigma}^2}}I_{\{z< 0\}}\right) dz]
\end{equation}
and
\begin{equation}\label{estimate_4}
\min_{0\leq k \leq n-m}\frac{1}{m}\sum_{l=1}^{m}[I_{\{y_{k+l}\leq y_0\}}-1+\frac{\sqrt{2}}{\overline{\sigma}+\underline{\sigma}}\int_{y_0-x_{k+l}-\overline{\mu}}^{\infty}\frac{1}{\sqrt{\pi}}\left(e^{-\frac{z^2}{2\overline{\sigma}^2}}I_{\{z\geq 0\}} + e^{-\frac{z^2}{2\underline{\sigma}^2}}I_{\{z< 0\}}\right) dz]
\end{equation}
converge to the left sides of \eqref{estimate_1} and \eqref{estimate_2} as $n\rightarrow \infty$, respectively. Then, we can summary the following Algorithm \ref{alg:mean-variance uncertainty} to estimate the parameters $\overline{\mu}$, $\underline{\mu}$, $\overline{\sigma}$ and $\underline{\sigma}$, and the performance of detection in ADUN channel, after providing a set of channel input and output data.

\begin{algorithm}[!h]
	\caption{Parameters and detection performance estimation method of additive noise channel with mean and variance uncertainty}
	\label{alg:mean-variance uncertainty}
	\renewcommand{\algorithmicrequire}{\textit{Input:}}
	\renewcommand{\algorithmicensure}{\textit{Output:}}
	\begin{algorithmic}[1]
		\REQUIRE Sample set $\{x_i\}_{i=1}^{n}$, $x_i\in \{x_A,x_B\}$. The sliding window size $m\leq n$.
		\ENSURE $\hat{\overline{\mu}}$, $\hat{\underline{\mu}}$, $\hat{\overline{\sigma}}$ and $\hat{\underline{\sigma}}$. $\hat{\overline{P_e}}$ and $\hat{\underline{P_e}}$.
		
		\STATE $\{x_i\}_{i=1}^{n}$ through the channel $Y=X+Z$ obtain $\{y_i\}_{i=1}^{n}$, where $Z=M+\delta$ is a noise with mean uncertainty and variance uncertainty, i.e., $M \overset{d}{=} M_{[\underline{\mu},\overline{\mu}]}$, $\delta \overset{d}{=} N(0,[\underline{\sigma}^2,\overline{\sigma}^2])$.
		
		\STATE Calculate $\hat{\overline{\mu}}$ and $\hat{\underline{\mu}}$ as the estimate of $\overline{\mu}$ and $\underline{\mu}$, and $\hat{y}_0=\frac{x_A+x_B+\hat{\overline{\mu}}+\hat{\underline{\mu}}}{2}$, where
		\begin{equation*}
		\hat{\overline{\mu}} = \max_{0\leq k \leq n-m}\frac{1}{m}\sum_{l=1}^{m}(y_l-x_l), \ \ \hat{\underline{\mu}} = \min_{0\leq k \leq n-m}\frac{1}{m}\sum_{l=1}^{m}(y_l-x_l).
		\end{equation*}
		
		\STATE Due to \eqref{estimate_1} and \eqref{estimate_2}, we obtain $\hat{\overline{\sigma}}$ and $\hat{\underline{\sigma}}$ as the estimate of $\overline{\sigma}$ and $\underline{\sigma}$ by solving the numerical result of equations $(\ref{estimate_3})=0$ and $(\ref{estimate_4})=0$, in which the $\overline{\mu}$, $\underline{\mu}$ and $y_0$ in the terms (\ref{estimate_3}) and (\ref{estimate_4}) are substituted with $\hat{\overline{\mu}}$, $\hat{\underline{\mu}}$ and $\hat{y}_0$.
		
		\STATE Approximate $\overline{P_e}$ and $\underline{P_e}$ as
		\begin{equation*}
		\hat{\overline{P_e}}=\frac{2\hat{\overline{\sigma}}}{\hat{\overline{\sigma}}+\hat{\underline{\sigma}}}Q(\frac{x_A+\hat{\underline{\mu}}-x_B-\hat{\overline{\mu}}}{2\hat{\overline{\sigma}}}), \ \ \hat{\underline{P_e}}=\frac{2\hat{\underline{\sigma}}}{\hat{\overline{\sigma}}+\hat{\underline{\sigma}}}Q(\frac{x_A+\hat{\overline{\mu}}-x_B-\hat{\underline{\mu}}}{2\hat{\underline{\sigma}}}).
		\end{equation*}
	\end{algorithmic}
\end{algorithm}

\begin{remark}\label{remark:variance}
	In ADUN channel without mean uncertainty, the range of variance uncertainty can be directly estimated by the $\varphi$-max-mean algorithm within the nonlinear expectation theory, based on the properties $\overline{\sigma}^2=\mathbb{E}[\delta^2]$ and $\underline{\sigma}^2=-\mathbb{E}[-\delta^2]$.
\end{remark}

\begin{remark}
	The estimators $\hat{\overline{\mu}}$ and $\hat{\underline{\mu}}$ in Algorithm \ref{alg:mean-variance uncertainty} are essentially the sample extrema of moving-window averages via the $\varphi$-max-mean algorithm, which serve as natural estimators for the upper and lower means $\overline{\mu}$ and $\underline{\mu}$. A relevant theoretical benchmark for understanding the finite sample behavior of such estimators is provided by the convergence rate of the law of large numbers under sublinear expectations. Specifically, \cite{song2021stein} established that for IID sequences with $\mathbb{E}[|X_1|^2]<\infty$, the convergence rate is bounded by $\frac{C}{\sqrt{n}}$. These theoretical guarantees establish the consistency of Algorithm \ref{alg:mean-variance uncertainty} and provide a quantitative characterization of its finite-sample behavior. From a computational perspective, the main cost lies in solving the empirical equations $(\ref{estimate_3})=0$ and $(\ref{estimate_4})=0$. For a fixed sample size $n$ and sliding window size $m$, a direct evaluation of computational complexity costs $\mathcal{O}((n-m)m)$. By updating the sliding-window sums recursively, the per-iteration computational cost is reduced to $\mathcal{O}(n)$. In summary, these analyses confirm that Algorithm \ref{alg:mean-variance uncertainty} is both statistically consistent and computationally practical for typical pilot sequence lengths.
\end{remark}

\begin{remark}
	It should be noted that the estimation procedure described above assumes the presence of variance uncertainty, i.e., $\overline{\sigma} > \underline{\sigma}$. When $\overline{\sigma} = \underline{\sigma}$, the two estimating equations in Step~3 of Algorithm~\ref{alg:mean-variance uncertainty} may become linearly dependent, leading to a potential identifiability issue. In practice, one can use the estimates of $\overline{\mu}$ and $\underline{\mu}$ obtained in Step~2 to guide the subsequent variance estimation: if a non-negligible gap between $\hat{\overline{\mu}}$ and $\hat{\underline{\mu}}$ is observed, it is reasonable to proceed with Step~3 to estimate $\hat{\overline{\sigma}}$ and $\hat{\underline{\sigma}}$; if, on the other hand, the mean uncertainty is found to be insignificant, the variance can be directly estimated based on Remark \ref{remark:variance}, thereby avoiding the numerically unstable questions in Step~3. For the theoretical analysis in this paper, we focus on the non‑degenerate case $\overline{\sigma} > \underline{\sigma}$, which is the primary regime of interest when uncertainty is genuinely present.
\end{remark}

\subsection{Detection in ADUN channel without mean uncertainty}\label{subsec4.2}

Let us consider the real ADUN channel without mean uncertainty, given by equation (\ref{AGNN channel}). Suppose $\delta \overset{d}{=} N(0,[\underline{\sigma}^2,\overline{\sigma}^2])$ and $\delta$ is independent of $X$ under sublinear expectation $\mathbb{E}$. For the binary input signal $X\in \{x_A,x_B\} \subset \mathbb{R}$, $x_A>x_B$, we define the lower bound of SNR ($\underline{\mathrm{SNR}}$) and the upper bound of SNR ($\overline{\mathrm{SNR}}$) as
\begin{equation}
\underline{\mathrm{SNR}}=\frac{(x_A-x_B)^2}{8\overline{\sigma}^2}, \quad \overline{\mathrm{SNR}}=\frac{(x_A-x_B)^2}{8\underline{\sigma}^2}.
\end{equation}
The definition of SNR is the same as (\ref{definition:SNR}).

\begin{remark}
	In dB units, there is the following relationship
	\begin{equation}
	\mathrm{SNR(dB)}=\frac{\underline{\mathrm{SNR}}\mathrm{(dB)}+\overline{\mathrm{SNR}}\mathrm{(dB)}}{2}+10\lg\frac{\overline{\sigma}^2+\underline{\sigma}^2}{2\overline{\sigma}\underline{\sigma}}.
	\end{equation}
\end{remark}

\begin{remark}
	Using binary input signals, for ADUN channel without mean uncertainty, with $\mathbb{E}[\delta^2]=\overline{\sigma}^2$ and $-\mathbb{E}[-\delta^2]=\underline{\sigma}^2$, its SNR is equivalent to the SNR of the traditional AWGN channel with the noise following $\mathcal{N}(0,\sigma^2)$ where $\sigma^2=\frac{2\overline{\sigma}^2\underline{\sigma}^2}{\overline{\sigma}^2+\underline{\sigma}^2}$.
\end{remark}

Since this is a special case of the previous section, we present the following conclusions without proof in this subsection. For a given $X=x$, there is also a set of probability distributions $\{P_{\theta}(Y\leq y)|P_{\theta}\in\mathcal{P}_{\Theta}\}$, which is crucial for characterizing the performance of detection problem in ADUN channel without mean uncertainty. The following theorem determines the maximum value and the minimum value of this set, i.e., $\underset{P_{\theta}\in\mathcal{P}_{\Theta}}{\sup}P_{\theta}(Y\leq y)$ and $\underset{P_{\theta}\in\mathcal{P}_{\Theta}}{\inf} P_{\theta}(Y\leq y)$.

\begin{theorem}\label{theorem:AGNN1}
	For the ADUN channel without mean uncertainty $Y=X+\delta$ with $\delta \overset{d}{=} N(0,[\underline{\sigma}^2,\overline{\sigma}^2])$, and a given $X=x$, the supremum and infimum values of the uncertain family of probabilities of $\{Y \leq y \}$ are given by $\overline{P'}(y)$ and $\underline{P'}(y)$, respectively. The explicit expressions are as follows:
	\begin{equation}\label{eq:AGNN upper}
	\overline{P'}(y)=\left\{
	\begin{array}{lll}
	\frac{2\overline{\sigma}}{\overline{\sigma}+\underline{\sigma}}Q(\frac{x-y}{\overline{\sigma}})  &\quad y\leq x  \\
	\quad \\
	\frac{\overline{\sigma}-\underline{\sigma}}{\overline{\sigma}+\underline{\sigma}}+\frac{2\underline{\sigma}}{\overline{\sigma}+\underline{\sigma}}Q(\frac{x-y}{\underline{\sigma}})  &\quad y>x
	\end{array}
	\right. ,
	\end{equation}
	\begin{equation}\label{eq:AGNN lower}
	\underline{P'}(y)=\left\{
	\begin{array}{ll}
	\frac{2\underline{\sigma}}{\overline{\sigma}+\underline{\sigma}}-\frac{2\underline{\sigma}}{\overline{\sigma}+\underline{\sigma}}Q(\frac{y-x}{\underline{\sigma}})  &\quad y \leq x  \\
	\quad \\
	1-\frac{2\overline{\sigma}}{\overline{\sigma}+\underline{\sigma}}Q(\frac{y-x}{\overline{\sigma}})  &\quad y > x
	\end{array}
	\right. .
	\end{equation}
\end{theorem}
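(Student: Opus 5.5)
The plan is to obtain Theorem~\ref{theorem:AGNN1} as the degenerate case $\underline{\mu}=\overline{\mu}=0$ of Theorem~\ref{theorem:AGDN1}, and to check the resulting formulas against the direct computation for a G-normal variable. For a fixed input $X=x$ we have $Y=x+\delta$. By the representation theorem,
\begin{equation*}
\overline{P'}(y)=\sup_{P_\theta\in\mathcal{P}_\Theta}P_\theta(Y\leq y)=\mathbb{E}[I_{\{\delta\leq y-x\}}],\qquad
\underline{P'}(y)=\inf_{P_\theta\in\mathcal{P}_\Theta}P_\theta(Y\leq y)=1-\mathbb{E}[I_{\{\delta> y-x\}}].
\end{equation*}
So the whole task reduces to computing the upper probabilities of half-lines for $\delta\overset{d}{=}N(0,[\underline{\sigma}^2,\overline{\sigma}^2])$.

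The shortest route is to take $M\equiv 0$. This is the maximal distribution $M_{[0,0]}$, and a constant is trivially independent of $\delta$. With this choice the ADUN channel becomes the ADUN channel without mean uncertainty, and substituting $\underline{\mu}=\overline{\mu}=0$ into (\ref{eq:AGNNmean upper}) and (\ref{eq:AGNNmean lower}) gives (\ref{eq:AGNN upper}) and (\ref{eq:AGNN lower}) directly. The case splits $y\leq \underline{\mu}+x$ and $y\leq x+\overline{\mu}$ both collapse to $y\leq x$.

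For a self-contained check I would redo the key step. By Corollary~2 of \cite{peng2020hypothesis},
\begin{equation*}
\mathbb{E}[I_{\{\delta\leq c\}}]=\frac{\sqrt{2}}{\overline{\sigma}+\underline{\sigma}}\int_{-c}^{\infty}\frac{1}{\sqrt{\pi}}\left(e^{-\frac{z^2}{2\overline{\sigma}^2}}I_{\{z\geq0\}}+e^{-\frac{z^2}{2\underline{\sigma}^2}}I_{\{z<0\}}\right)dz ,
\end{equation*}
applied with $c=y-x$. I then split according to the sign of $-c=x-y$.
\begin{itemize}
\item If $x-y\geq 0$, only the $\overline{\sigma}$-branch contributes. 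Substituting $z=\overline{\sigma}u$ gives $\frac{2\overline{\sigma}}{\overline{\sigma}+\underline{\sigma}}Q(\frac{x-y}{\overline{\sigma}})$.
\item If $x-y<0$, the integral splits into the whole positive half-line, worth $\frac{\overline{\sigma}}{\overline{\sigma}+\underline{\sigma}}$, plus the $\underline{\sigma}$-integral over $[x-y,0]$, worth $\frac{2\underline{\sigma}}{\overline{\sigma}+\underline{\sigma}}\bigl(Q(\frac{x-y}{\underline{\sigma}})-\tfrac12\bigr)$. These sum to the stated expression $\frac{\overline{\sigma}-\underline{\sigma}}{\overline{\sigma}+\underline{\sigma}}+\frac{2\underline{\sigma}}{\overline{\sigma}+\underline{\sigma}}Q(\frac{x-y}{\underline{\sigma}})$.
\end{itemize}

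The lower probability $\underline{P'}$ follows the same way. Since $-\delta\overset{d}{=}\delta$ for a G-normal variable, $\mathbb{E}[I_{\{\delta>c\}}]=\mathbb{E}[I_{\{\delta<-c\}}]$, and the same integral formula applies with threshold $c=y-x$. Taking $1-\mathbb{E}[\cdot]$ then yields the two branches of (\ref{eq:AGNN lower}), using $1-\frac{\overline{\sigma}-\underline{\sigma}}{\overline{\sigma}+\underline{\sigma}}=\frac{2\underline{\sigma}}{\overline{\sigma}+\underline{\sigma}}$.

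The only genuinely delicate point is justifying that $\mathbb{E}$ can be evaluated on indicator functions, which are not in $C_{l,lip}$, via the density-type formula for G-normal upper probabilities. I would rely on the cited corollary for this, together with the modelling convention in the Remark after Definition~\ref{definition:AGDN} that the relevant identities hold for bounded measurable $\phi$.

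A further subtlety is the boundary point $y=x$ and the strict versus non-strict inequality. Since the G-normal capacity charges no single point, the continuity of both branches at $y=x$ should settle this; I would verify it by checking that both branches give $\frac{\overline{\sigma}}{\overline{\sigma}+\underline{\sigma}}$ there.
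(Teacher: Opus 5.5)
Your proposal is correct and follows the paper's route: the paper states Theorem~\ref{theorem:AGNN1} without proof as the special case $\underline{\mu}=\overline{\mu}=0$ (i.e.\ $M\equiv 0$) of Theorem~\ref{theorem:AGDN1}. Your self-contained check reruns that theorem's proof via Corollary~2 of \cite{peng2020hypothesis}, using the symmetry $-\delta\overset{d}{=}\delta$ in place of the paper's direct formula for $\mathbb{E}[I_{\{\delta>c\}}]$. One small slip in the boundary check: at $y=x$ the two branches of $\overline{P'}$ meet at $\frac{\overline{\sigma}}{\overline{\sigma}+\underline{\sigma}}$, whereas those of $\underline{P'}$ meet at $\frac{\underline{\sigma}}{\overline{\sigma}+\underline{\sigma}}$.
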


\begin{corollary}\label{corollary:AGNN1}
	For the ADUN channel without mean uncertainty $Y=X+\delta$ with $\delta \overset{d}{=} N(0,[\underline{\sigma}^2,\overline{\sigma}^2])$, and a given $X=x$, the supremum and infimum values of the uncertain family of probabilities of $\{Y > y \}$ are given by $\overline{T'}(y)$ and $\underline{T'}(y)$, respectively. The explicit expressions are as follows:
	\begin{equation}\label{eq:AGNN tail upper}
	\overline{T'}(y)=\left\{
	\begin{array}{lll}
	\frac{\overline{\sigma}-\underline{\sigma}}{\overline{\sigma}+\underline{\sigma}}+\frac{2\underline{\sigma}}{\overline{\sigma}+\underline{\sigma}}Q(\frac{y-x}{\underline{\sigma}})  &\quad y \leq x  \\
	\quad \\
	\frac{2\overline{\sigma}}{\overline{\sigma}+\underline{\sigma}}Q(\frac{y-x}{\overline{\sigma}})  &\quad y > x
	\end{array}
	\right. ,
	\end{equation}
	\begin{equation}\label{eq:AGNN tail lower}
	\underline{T'}(y)=\left\{
	\begin{array}{lll}
	1-\frac{2\overline{\sigma}}{\overline{\sigma}+\underline{\sigma}}Q(\frac{x-y}{\overline{\sigma}})  &\quad y\leq x  \\
	\quad \\
	\frac{2\underline{\sigma}}{\overline{\sigma}+\underline{\sigma}}-\frac{2\underline{\sigma}}{\overline{\sigma}+\underline{\sigma}}Q(\frac{x-y}{\underline{\sigma}})  &\quad y>x
	\end{array}
	\right. .
	\end{equation}
\end{corollary}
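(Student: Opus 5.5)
The plan is to treat Corollary \ref{corollary:AGNN1} as the special case $\overline{\mu}=\underline{\mu}=0$ of Corollary \ref{corollary:AGDN1}. Here $M$ degenerates to the constant $0$, since $M_{[0,0]}$ satisfies $\mathbb{E}[\varphi(M)]=\varphi(0)$. For completeness, I would also indicate the direct argument, which parallels the proof of Theorem \ref{theorem:AGDN1}. For given $X=x$ we write
\begin{equation*}
\overline{T'}(y)=\sup_{P_{\theta}\in\mathcal{P}_{\Theta}}P_{\theta}(Y>y)=\mathbb{E}[I_{\{\delta>y-x\}}], \qquad \underline{T'}(y)=\inf_{P_{\theta}\in\mathcal{P}_{\Theta}}P_{\theta}(Y>y)=\mathcal{E}[I_{\{\delta>y-x\}}].
\end{equation*}
These identities use the representation theorem and the assumption that $\delta$ is independent of $X$ under $\mathbb{E}$, applied to bounded measurable (indicator) functions as stipulated in the modeling remark.

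\textbf{Upper tail.} For $\overline{T'}$ I would invoke the same explicit formula from Corollary 2 in \cite{peng2020hypothesis} used in Theorem \ref{theorem:AGDN1}:
\begin{equation*}
\mathbb{E}[I_{\{\delta>c\}}]=\frac{\sqrt{2}}{\overline{\sigma}+\underline{\sigma}}\int_{c}^{\infty}\frac{1}{\sqrt{\pi}}\left(e^{-\frac{z^2}{2\overline{\sigma}^2}}I_{\{z\geq 0\}}+e^{-\frac{z^2}{2\underline{\sigma}^2}}I_{\{z<0\}}\right)dz .
\end{equation*}
Take $c=y-x$ and split at $c=0$. If $c>0$, only the $\overline{\sigma}$-branch contributes. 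The change of variable $z=\overline{\sigma}u$ turns it into $\frac{2\overline{\sigma}}{\overline{\sigma}+\underline{\sigma}}Q(\frac{y-x}{\overline{\sigma}})$.

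If $c\le 0$, the integral splits into $\int_c^0$ on the $\underline{\sigma}$-branch plus $\int_0^\infty$ on the $\overline{\sigma}$-branch. The second piece equals $\frac{\overline{\sigma}}{\overline{\sigma}+\underline{\sigma}}$. The first equals $\frac{2\underline{\sigma}}{\overline{\sigma}+\underline{\sigma}}\left(Q(\frac{c}{\underline{\sigma}})-\frac12\right)$. Summing gives $\frac{\overline{\sigma}-\underline{\sigma}}{\overline{\sigma}+\underline{\sigma}}+\frac{2\underline{\sigma}}{\overline{\sigma}+\underline{\sigma}}Q(\frac{y-x}{\underline{\sigma}})$, which is \eqref{eq:AGNN tail upper}.

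\textbf{Lower tail.} For $\underline{T'}$ I would use conjugation, $\mathcal{E}[I_{\{\delta>y-x\}}]=1-\mathbb{E}[I_{\{\delta\le y-x\}}]=1-\overline{P'}(y)$. Then I substitute \eqref{eq:AGNN upper} from Theorem \ref{theorem:AGNN1}, itself the $\underline{\mu}=0$ case of \eqref{eq:AGNNmean upper}.

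For $y\le x$ this immediately yields $1-\frac{2\overline{\sigma}}{\overline{\sigma}+\underline{\sigma}}Q(\frac{x-y}{\overline{\sigma}})$. For $y>x$ it yields $1-\frac{\overline{\sigma}-\underline{\sigma}}{\overline{\sigma}+\underline{\sigma}}-\frac{2\underline{\sigma}}{\overline{\sigma}+\underline{\sigma}}Q(\frac{x-y}{\underline{\sigma}})$, and simplifying $1-\frac{\overline{\sigma}-\underline{\sigma}}{\overline{\sigma}+\underline{\sigma}}=\frac{2\underline{\sigma}}{\overline{\sigma}+\underline{\sigma}}$ gives \eqref{eq:AGNN tail lower}.

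\textbf{Main obstacle.} There is no real difficulty, only bookkeeping. The one point requiring care is the boundary case $y=x$, where $\{\delta>0\}$ versus $\{\delta\ge 0\}$ must not matter. This holds because the integral formula assigns no atom at $0$, so each piecewise formula is continuous across $y=x$. I would verify this by checking that both branches of \eqref{eq:AGNN tail upper} equal $\frac{\overline{\sigma}}{\overline{\sigma}+\underline{\sigma}}$ at $y=x$, and both branches of \eqref{eq:AGNN tail lower} equal $\frac{\underline{\sigma}}{\overline{\sigma}+\underline{\sigma}}$ there.
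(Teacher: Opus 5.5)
Your proposal is correct and follows the paper's route. The paper gives no separate proof of this corollary; it treats it as the $\overline{\mu}=\underline{\mu}=0$ specialization of Corollary \ref{corollary:AGDN1}. That corollary's proof is the same computation you carry out: evaluate $\mathbb{E}[I_{\{M+\delta>y-x\}}]$ via Corollary 2 of \cite{peng2020hypothesis}, then use the conjugation $\mathcal{E}[I_{\{M+\delta>y-x\}}]=1-\mathbb{E}[I_{\{M+\delta\le y-x\}}]$, as in the proof of Theorem \ref{theorem:AGDN1}. Your integral evaluations and the simplifications of the branches are all correct. The "obstacle" at $y=x$ is not really one: the complement of $\{\delta>y-x\}$ is exactly $\{\delta\le y-x\}$, and the integral formula applies directly at $c=0$.
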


\begin{remark}
	If $\overline{\sigma}=\underline{\sigma}=\sigma$, the conclusions in Theorem \ref{theorem:AGNN1} and Corollary \ref{corollary:AGNN1} degenerate into the traditional result about AWGN channel.
\end{remark}

Regarding the ADUN channel without mean uncertainty that is focused in this subsection, the theorem presented below illustrates that the minimum distance rule is still the optimal threshold detector when the noise within the channel has no mean uncertainty but has variance uncertainty.

\begin{theorem}\label{theorem:AGNN detector}
	Consider the ADUN channel without mean uncertainty $Y=X+\delta$ with $\delta \overset{d}{=} N(0,[\underline{\sigma}^2,\overline{\sigma}^2])$, and the channel input signal $X$ is equally likely to take on the values $x_A$ or $x_B$, $x_A>x_B$. Then the minimum distance rule, choosing $x_A$ when the received signal $Y > \frac{x_A+x_B}{2}$ and $x_B$ otherwise, is the optimal threshold detector for this system, and the corresponding maximum probability (denoted as $\overline{P_e}$) and minimum probability (denoted as $\underline{P_e}$) of error occurrence are
	\begin{equation}
	\overline{P_e} = \frac{2\overline{\sigma}}{\overline{\sigma}+\underline{\sigma}}Q(\sqrt{2\underline{\mathrm{SNR}}}),
	\end{equation}
	\begin{equation}
	\underline{P_e} = \frac{2\underline{\sigma}}{\overline{\sigma}+\underline{\sigma}}Q(\sqrt{2\overline{\mathrm{SNR}}}).
	\end{equation}
\end{theorem}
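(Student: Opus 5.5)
The plan is to specialise the argument of Theorem \ref{theorem:AGNNmean detector} to $\overline{\mu}=\underline{\mu}=0$. In that case the existence condition $\overline{\mu}-\underline{\mu}<x_A-x_B$ reduces to $0<x_A-x_B$, which always holds, so an optimal threshold detector always exists.

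First I fix a threshold $y_0$: the rule chooses $x_A$ if $Y>y_0$ and $x_B$ otherwise. Because $\delta$ is independent of $X$ and $X$ is equiprobable, the same decomposition as in the proof of Theorem \ref{theorem:AGNNmean detector} gives
\begin{equation*}
\overline{P_e}(y_0)=\tfrac12\overline{P'}(y_0)\big|_{x=x_A}+\tfrac12\overline{T'}(y_0)\big|_{x=x_B},\qquad
\underline{P_e}(y_0)=\tfrac12\underline{P'}(y_0)\big|_{x=x_A}+\tfrac12\underline{T'}(y_0)\big|_{x=x_B}.
\end{equation*}
The explicit pieces come from Theorem \ref{theorem:AGNN1} and Corollary \ref{corollary:AGNN1}. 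I then split into the three regions $y_0\le x_B$, $x_B<y_0\le x_A$ and $y_0>x_A$. The resulting piecewise formulas are exactly those in the proof of Theorem \ref{theorem:AGNNmean detector} with $\overline{\mu}=\underline{\mu}=0$.

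Next I minimise each function over $y_0$.
\begin{itemize}
\item In the middle region, $\overline{P_e}(y_0)=\frac{\overline{\sigma}}{\overline{\sigma}+\underline{\sigma}}\big[Q(\frac{x_A-y_0}{\overline{\sigma}})+Q(\frac{y_0-x_B}{\overline{\sigma}})\big]$. Its derivative is proportional to $\phi(\frac{x_A-y_0}{\overline{\sigma}})-\phi(\frac{y_0-x_B}{\overline{\sigma}})$, where $\phi$ is the standard normal density. This vanishes only at $y_0=\frac{x_A+x_B}{2}$, is negative to the left and positive to the right, so the midpoint is the minimiser within the region.
\item The same symmetry argument applies to $\underline{P_e}$, which in the middle region is a constant minus $\frac{\underline{\sigma}}{\overline{\sigma}+\underline{\sigma}}\big[Q(\frac{y_0-x_A}{\underline{\sigma}})+Q(\frac{x_B-y_0}{\underline{\sigma}})\big]$.
\item In the outer regions I show each function is monotone moving away from $[x_B,x_A]$, and that it is continuous at the breakpoints. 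For $y_0>x_A$, for instance, the derivative of $\overline{P_e}$ is proportional to $\frac{1}{\overline{\sigma}+\underline{\sigma}}\big[\phi(\frac{y_0-x_A}{\underline{\sigma}})-\phi(\frac{y_0-x_B}{\overline{\sigma}})\big]$. I have to check this is positive, using $(y_0-x_A)/\underline{\sigma}$ versus $(y_0-x_B)/\overline{\sigma}$.
\end{itemize}

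Finally I substitute $y_0=\frac{x_A+x_B}{2}$. This gives $\overline{P_e}=\frac{2\overline{\sigma}}{\overline{\sigma}+\underline{\sigma}}Q(\frac{x_A-x_B}{2\overline{\sigma}})$ and $\underline{P_e}=\frac{2\underline{\sigma}}{\overline{\sigma}+\underline{\sigma}}Q(\frac{x_A-x_B}{2\underline{\sigma}})$. Since $\frac{x_A-x_B}{2\overline{\sigma}}=\sqrt{2\underline{\mathrm{SNR}}}$ and $\frac{x_A-x_B}{2\underline{\sigma}}=\sqrt{2\overline{\mathrm{SNR}}}$, these are the stated expressions.

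The main obstacle is the outer-region monotonicity. The sign comparison of the two Gaussian densities there is not immediate when $\overline{\sigma}\ne\underline{\sigma}$. If the direct derivative comparison fails, my fallback is to compare values instead: show $\overline{P_e}(y_0)\ge\overline{P_e}(x_A)$ for $y_0>x_A$, using that as $y_0\to\infty$ the function tends to $\frac{\overline{\sigma}-\underline{\sigma}}{2(\overline{\sigma}+\underline{\sigma})}+\frac{\underline{\sigma}}{\overline{\sigma}+\underline{\sigma}}=\frac12$, which exceeds the midpoint value. That would give a global minimum at the midpoint rather than only a local one.
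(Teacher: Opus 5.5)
Your approach is the paper's. The paper gives no separate proof and treats this theorem as the $\overline{\mu}=\underline{\mu}=0$ case of Theorem~\ref{theorem:AGNNmean detector}, whose Case~1 simply asserts that minimizing $\overline{P_e}(y_0)$ and $\underline{P_e}(y_0)$ gives the midpoint. Three parts of your plan are correct: the middle-region analysis, the final substitution, and the outer-region argument for $\underline{P_e}$. For $y_0>x_A$ the derivative of $\underline{P_e}$ is proportional to $\phi(\frac{y_0-x_A}{\overline{\sigma}})-\phi(\frac{y_0-x_B}{\underline{\sigma}})>0$, and symmetry about the midpoint handles $y_0\le x_B$.

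\textbf{The gap.} It is in the step you flagged for $\overline{P_e}$, and neither your derivative test nor your fallback closes it. The derivative you wrote is positive iff $(y_0-x_A)(\overline{\sigma}-\underline{\sigma})<(x_A-x_B)\underline{\sigma}$. So for $\overline{\sigma}>\underline{\sigma}$, $\overline{P_e}$ increases and then decreases toward its limit $\frac12$. Your fallback needs $\frac12$ to exceed the midpoint value $\frac{2\overline{\sigma}}{\overline{\sigma}+\underline{\sigma}}Q(\frac{x_A-x_B}{2\overline{\sigma}})$, and this is false at low SNR. As $x_A-x_B\to 0$ that value tends to $\frac{\overline{\sigma}}{\overline{\sigma}+\underline{\sigma}}>\frac12$. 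Concretely, with $\overline{\sigma}=1$, $\underline{\sigma}=\frac12$, $x_A-x_B=0.2$, the midpoint gives $\overline{P_e}\approx 0.61$, while $\overline{P_e}(y_0)\to\frac12$ as $y_0\to\infty$. In that regime the minimum-distance rule does not minimize the maximum error probability, and no finite threshold does. Your intermediate claim $\overline{P_e}(y_0)\ge\overline{P_e}(x_A)$ for all $y_0>x_A$ also fails whenever $\overline{P_e}(x_A)>\frac12$.

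\textbf{The fix.} The statement itself needs an extra hypothesis, namely $\frac{2\overline{\sigma}}{\overline{\sigma}+\underline{\sigma}}Q(\sqrt{2\underline{\mathrm{SNR}}})\le\frac12$. In words, the claimed $\overline{P_e}$ must not exceed $\frac12$, the error of the trivial rule that always picks one symbol. The paper's specialization argument, and Case~1 of Theorem~\ref{theorem:AGNNmean detector}, have the same omission. Under this hypothesis your plan goes through with the correct comparison:
\begin{itemize}
\item By the unimodality above and continuity at $x_A$, $\overline{P_e}(y_0)>\min\{\overline{P_e}(x_A),\frac12\}$ for all $y_0>x_A$.
\item $\overline{P_e}(x_A)$ exceeds the midpoint value by your middle-region analysis, and $\frac12$ is at least the midpoint value by the hypothesis.
\item The region $y_0\le x_B$ follows from the symmetry $\overline{P_e}(y_0)=\overline{P_e}(x_A+x_B-y_0)$.
\end{itemize}
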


\begin{remark}
	When $\overline{\sigma}=\underline{\sigma}=\sigma$, the conclusions in Theorem \ref{theorem:AGNN detector} degenerate into the tradition results for that in AWGN channel.
\end{remark}

\section{From additive noise channel to fading channel}\label{sec5}

In the preceding sections of this paper, we have focused our analyses on additive noise channels without fading, providing the definitions of ADUN channel and ADUN channel without mean uncertainty. This simplification allowed us to delve into the fundamental aspects of signal transmission and reception under ideal conditions. However, real-world communication channels are often subject to fading, a phenomenon that can significantly degrade signal quality. In this section, we transition from the idealized channels to a more realistic one by considering fading channels. Specifically, we will address the case where the receiver has knowledge of the channel fading, a condition often referred to as ``known channel state information at the receiver (CSIR)" \cite{goldsmith1997capacity}. Despite the presence of distribution uncertainty in the channel noise, we will demonstrate that the detection problem in fading channels can be effectively transformed into a problem analogous to that of additive noise channels. This transformation enables us to utilize the conclusions developed in ADUN channel and ADUN channel without mean uncertainty to tackle scenarios where the channel has fading.

Consider the channel with complex random variables of Rayleigh fading and noise
\begin{equation}
Y'=HX+Z',
\end{equation}
where $H \sim \mathcal{CN}(0,1)$ is the channel fading. For the binary input signal $X\in \{x_A,x_B\} \subset \mathbb{R}$, $x_A>x_B$, the detection problem can be equivalent to one based on the sign of the real sufficient statistics
\begin{equation}\label{eq:real sufficient statistics}
S'=|H|X+W',
\end{equation}
where $S'=\mathrm{Re}(\overline{H}Y')$, $W'=\mathrm{Re}(\overline{H}Z')$, $\overline{H}=(\frac{H}{|H|})^*$. If $W'$ is G-normally distributed, then, based on the theoretical framework established in Section \ref{subsec4.1}, the performance of the detection problem can be characterized. Given $H=h$, based on Theorem \ref{theorem:AGNN detector}, the maximum error probability and minimum error probability of detecting $X$ are
\begin{equation}
\overline{P_{e,h}}=\frac{2\overline{\sigma}}{\overline{\sigma}+\underline{\sigma}}Q\left(\frac{|h|(x_A-x_B)}{2\overline{\sigma}}\right),
\end{equation}
\begin{equation}
\underline{P_{e,h}}=\frac{2\underline{\sigma}}{\overline{\sigma}+\underline{\sigma}}Q\left(\frac{|h|(x_A-x_B)}{2\underline{\sigma}}\right).
\end{equation}
Then, we average over the random channel fading $H$, by directly integration, and yield
\begin{equation}
\overline{P_{e}}=E[\overline{P_{e,H}}]=\frac{\overline{\sigma}}{\overline{\sigma}+\underline{\sigma}}[1-\sqrt{\frac{(x_A-x_B)^2}{(x_A-x_B)^2+8\overline{\sigma}^2}}],
\end{equation}
\begin{equation}
\underline{P_{e}}=E[\underline{P_{e,H}}]=\frac{\underline{\sigma}}{\overline{\sigma}+\underline{\sigma}}[1-\sqrt{\frac{(x_A-x_B)^2}{(x_A-x_B)^2+8\underline{\sigma}^2}}].
\end{equation}

To ensure that $W'$ follows a G-normal distribution, the following theorem provides a sufficient condition.

\begin{theorem}\label{theorem:sufficient G-normal}
	Let $(\Omega,\mathcal{H},\mathbb{E})$ be the sublinear expectation space. $Z'=\mathrm{Re}(Z')+i\mathrm{Im}(Z')$ be a complex random variable, where $\mathrm{Re}(Z'), \mathrm{Im}(Z') \overset{d}{=} N(0,[\underline{\sigma}^2,\overline{\sigma}^2])$. If $\mathrm{Im}(Z')$ is independent of $\mathrm{Re}(Z')$ under $\mathbb{E}$, then for a complex number $h$, $hZ'$ still satisfies both real and imaginary parts follow a G-normal distribution $N(0,[|h|^2\underline{\sigma}^2,|h|^2\overline{\sigma}^2])$.
\end{theorem}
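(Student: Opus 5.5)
The plan is to write $h=a+ib$ with $a,b\in\mathbb{R}$ and set $X_1=\mathrm{Re}(Z')$, $X_2=\mathrm{Im}(Z')$. Then
\begin{equation*}
\mathrm{Re}(hZ')=aX_1-bX_2,\qquad \mathrm{Im}(hZ')=bX_1+aX_2 .
\end{equation*}
Both are real linear combinations $\alpha X_1+\beta X_2$ with $\alpha^2+\beta^2=|h|^2$. So it suffices to prove one claim: for all real $\alpha,\beta$, the combination $\alpha X_1+\beta X_2$ is G-normally distributed as $N(0,[(\alpha^2+\beta^2)\underline{\sigma}^2,(\alpha^2+\beta^2)\overline{\sigma}^2])$. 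By hypothesis $X_2\overset{d}{=}X_1$ and $X_2$ is independent of $X_1$, so $X_2$ is an independent copy of $X_1$ in the sense of the paper's definition. This is exactly the setting of Definition \ref{G-normal}, except that Definition \ref{G-normal} only covers nonnegative coefficients.

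The first step removes the sign restriction by proving that a G-normal variable is symmetric, i.e. $-\delta\overset{d}{=}\delta$. I would use Theorem \ref{G-normal and PDE-1}. For $\varphi\in C_{l,lip}(\mathbb{R})$, put $\psi(x)=\varphi(-x)$ and let $v(t,x):=\mathbb{E}[\varphi(x-\sqrt{t}\delta)]=u_\psi(t,-x)$, where $u_\psi$ is the solution with initial datum $\psi$. Because $G$ depends only on $\partial^2_{xx}u$, and the reflection $x\mapsto -x$ leaves second derivatives unchanged (also in the viscosity sense, since test functions can be reflected as well), $v$ is a viscosity solution of the same Cauchy problem with datum $\varphi$. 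Uniqueness then gives $\mathbb{E}[\varphi(-\delta)]=\mathbb{E}[\varphi(\delta)]$. Passing from $C_{l,lip}$ to bounded Borel test functions is handled the same way the paper already treats G-normal laws, namely through the explicit distribution used in the proof of Theorem \ref{theorem:AGDN1}.

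The second step handles the signs. Write $\alpha=s_1|\alpha|$ and $\beta=s_2|\beta|$ with $s_i\in\{\pm1\}$. Then $s_1X_1$ and $s_2X_2$ are both $N(0,[\underline{\sigma}^2,\overline{\sigma}^2])$ by the first step. Moreover, $s_2X_2$ is independent of $s_1X_1$: any $\phi(s_1X_1,s_2X_2)$ equals $\tilde\phi(X_1,X_2)$ with $\tilde\phi$ bounded measurable, so equation (\ref{eq:independent}) transfers directly. Hence $s_2X_2$ is an independent copy of $s_1X_1$, and Definition \ref{G-normal} gives
\begin{equation*}
|\alpha|(s_1X_1)+|\beta|(s_2X_2)\overset{d}{=}\sqrt{\alpha^2+\beta^2}\,s_1X_1\overset{d}{=}|h|X_1 .
\end{equation*}
The degenerate cases $\alpha=0$ or $\beta=0$ are immediate.

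The final step identifies the law of $|h|X_1$. A positive multiple $c\delta$ of a G-normal variable is again G-normal, because the defining relation in Definition \ref{G-normal} is invariant under scaling both $\delta$ and $\overline{\delta}$ by $c$. By positive homogeneity, $\mathbb{E}[(c\delta)^2]=c^2\overline{\sigma}^2$ and $-\mathbb{E}[-(c\delta)^2]=c^2\underline{\sigma}^2$. Taking $c=|h|$, this gives $N(0,[|h|^2\underline{\sigma}^2,|h|^2\overline{\sigma}^2])$. (If $h=0$ the claim is trivial, with the law degenerate at zero.)

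I expect the main obstacle to be the symmetry step, since Definition \ref{G-normal} only allows nonnegative coefficients and the paper never states $-\delta\overset{d}{=}\delta$. The PDE reflection argument is where care is needed. A fallback is to read the symmetry directly from the explicit formula for $\mathbb{E}[I_{\{\delta\le z\}}]$ taken from \cite{peng2020hypothesis}, but as quoted in Theorem \ref{theorem:AGNN1} that formula is visibly not reflection-invariant, since $\overline{\sigma}$ and $\underline{\sigma}$ swap roles on the two sides. So the fallback would at best need the full distribution functional rather than just indicator probabilities, and may actually contradict the symmetry claimed here. I would therefore rely on the PDE argument as the primary route and flag this consistency check explicitly.
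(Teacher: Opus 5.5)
Your argument is correct, but it identifies the law by a different route than the paper.

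The paper writes $hZ'=[h_1\mathrm{Re}(Z')-h_2\mathrm{Im}(Z')]+i[h_2\mathrm{Re}(Z')+h_1\mathrm{Im}(Z')]$. It asserts G-normality of both parts in one line, because $\mathrm{Im}(Z')$ is an independent copy of $\mathrm{Re}(Z')$; this tacitly applies Definition~\ref{G-normal} to coefficients of mixed sign. It then pins down the parameters by a moment computation. It expands $(\mathrm{Re}(hZ'))^2$ and kills the cross term $-2h_1h_2\mathrm{Re}(Z')\mathrm{Im}(Z')$ using independence and the absence of mean uncertainty of $\mathrm{Im}(Z')$. Evaluating the remaining nested expectation gives $|h|^2\overline{\sigma}^2$, and analogously $|h|^2\underline{\sigma}^2$.

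You instead prove the symmetry $-\delta\overset{d}{=}\delta$, which is exactly what legitimizes the mixed-sign use of the stability relation. You then obtain the full identity $\alpha X_1+\beta X_2\overset{d}{=}|h|X_1$, so the parameters follow from positive homogeneity with no cross-term bookkeeping. Your version therefore fills a step the paper skips. The paper's explicit second-moment calculation, in turn, is the template it reuses for Theorem~\ref{theorem:sufficient G-normal and maximal}.

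Your PDE reflection is fine, but there is an equivalent shortcut. The variable $-\delta$ satisfies Definition~\ref{G-normal} with independent copy $-\overline{\delta}$, and it has the same $\mathbb{E}[\delta^2]$ and $-\mathbb{E}[-\delta^2]$. The uniqueness behind Theorem~\ref{G-normal and PDE-1} then gives $-\delta\overset{d}{=}\delta$ directly.

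Your closing consistency worry is unfounded. Reflection invariance requires $\mathbb{E}[I_{\{\delta\le y\}}]=\mathbb{E}[I_{\{-\delta\le y\}}]=\mathbb{E}[I_{\{\delta\ge -y\}}]$. So, with $x=0$, $\overline{P'}(y)$ must be compared with the upper tail probability $\overline{T'}(-y)$ of Corollary~\ref{corollary:AGNN1}, not with $1-\overline{P'}(-y)$. Under a sublinear expectation $\mathbb{E}[I_A]\neq 1-\mathbb{E}[I_{A^c}]$ in general, and that is the source of the apparent swap of $\overline{\sigma}$ and $\underline{\sigma}$.

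The two expressions in fact agree exactly. Both equal $\frac{2\overline{\sigma}}{\overline{\sigma}+\underline{\sigma}}Q(-y/\overline{\sigma})$ for $y\le 0$, and both equal $\frac{\overline{\sigma}-\underline{\sigma}}{\overline{\sigma}+\underline{\sigma}}+\frac{2\underline{\sigma}}{\overline{\sigma}+\underline{\sigma}}Q(-y/\underline{\sigma})$ for $y>0$. So the explicit formula confirms symmetry on half-lines.

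It does not, however, upgrade your $C_{l,lip}$ symmetry to all bounded Borel test functions, as the paper's definition of $\overset{d}{=}$ formally demands. That sentence should therefore not be presented as a proof. The same $C_{l,lip}$ versus $\mathbb{L}^{\infty}$ mismatch is also left unaddressed in the paper's own argument.
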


{\bf Proof}\ \ Suppose $h=h_1+ih_2$, then $hZ'=[h_1\mathrm{Re}(Z')-h_2\mathrm{Im}(Z')]+i[h_2\mathrm{Re}(Z')+h_1\mathrm{Im}(Z')]$. Due to $\mathrm{Im}(Z')$ is an independent copy of $\mathrm{Re}(Z')$, we get that both real and imaginary parts of $hZ'$ are G-normally distributed random variables.

For the term $\mathrm{Re}(hZ')$, there holds
\begin{equation}
\begin{split}
\mathbb{E}[(\mathrm{Re}(hZ'))^2]=&\mathbb{E}[h_1^2(\mathrm{Re}(Z'))^2-2h_1h_2\mathrm{Re}(Z')\mathrm{Im}(Z')+h_2^2(\mathrm{Im}(Z'))^2]\\
\overset{(a)}{=}&\mathbb{E}[h_1^2(\mathrm{Re}(Z'))^2+h_2^2(\mathrm{Im}(Z'))^2]\\
\overset{(b)}{=}&\mathbb{E}[\mathbb{E}[h_1^2 z^2+h_2^2(\mathrm{Im}(Z'))^2|z=\mathrm{Re}(Z')]]=|h|^2\overline{\sigma}^2.
\end{split}
\end{equation}
where $(a)$ is because $\mathrm{Im}(Z')$ is independent of $\mathrm{Re}(Z')$ and $\mathrm{Im}(Z')$ has no mean uncertainty, $(b)$ is obtained by Definition \ref{definition:independent}. Similarly, we can obtain $-\mathbb{E}[-(\mathrm{Re}(hZ'))^2]=|h|^2\underline{\sigma}^2$.

For the term $\mathrm{Im}(hZ')$, the same method yields the conclusion. $\hfill\blacksquare$

The next theorem considers the case where both mean uncertainty and variance uncertainty exist simultaneously. The proof follows the same line of reasoning as that of Theorem~\ref{theorem:sufficient G-normal}. Expanding $\mathrm{Re}(hZ')$ and $\mathrm{Im}(hZ')$ in terms of the real and imaginary parts of $h$ and computing each term, together with the independence assumptions, yields the stated parameter ranges. We therefore omit the detailed derivation for brevity.

\begin{theorem}\label{theorem:sufficient G-normal and maximal}
	Let $(\Omega,\mathcal{H},\mathbb{E})$ be the sublinear expectation space. $Z'=\mathrm{Re}(Z')+i\mathrm{Im}(Z')$ be a complex random variable, where $\mathrm{Re}(Z')=M+\delta$, $ \mathrm{Im}(Z')=M_1+\delta_1$, $M,M_1 \overset{d}{=}M_{[\underline{\mu},\overline{\mu}]}$, $\delta,\delta_1 \overset{d}{=} N(0,[\underline{\sigma}^2,\overline{\sigma}^2])$. If $M_1$ is independent of $M$, $\delta_1$ is independent of $\delta$, under $\mathbb{E}$, then for a complex number $h$, $hZ'$ satisfies $\mathrm{Re}(hZ')=\widetilde{M}+\widetilde{\delta}$, $\mathrm{Im}(hZ')=\widetilde{M_1}+\widetilde{\delta_1}$, where $\widetilde{\delta},\widetilde{\delta_1} \overset{d}{=} N(0,[|h|^2\underline{\sigma}^2,|h|^2\overline{\sigma}^2])$, and
	\begin{equation}
	\widetilde{M} \overset{d}{=} M_{[\underline{\mu}(h_1^{+}+h_2^{-})-\overline{\mu}(h_1^{-}+h_2^{+}),\ \overline{\mu}(h_1^{+}+h_2^{-})-\underline{\mu}(h_1^{-}+h_2^{+}) ]},
	\end{equation}
	\begin{equation}
	\widetilde{M_1} \overset{d}{=} M_{[ \underline{\mu}(h_1^{+}+h_2^{+})-\overline{\mu}(h_1^{-}+h_2^{-}),\ \overline{\mu}(h_1^{+}+h_2^{+})-\underline{\mu}(h_1^{-}+h_2^{-}) ]}.
	\end{equation}
\end{theorem}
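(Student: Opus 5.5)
We write $h=h_1+ih_2$ and expand exactly as in the proof of Theorem~\ref{theorem:sufficient G-normal}:
\begin{equation*}
\mathrm{Re}(hZ')=\bigl(h_1M-h_2M_1\bigr)+\bigl(h_1\delta-h_2\delta_1\bigr),\qquad
\mathrm{Im}(hZ')=\bigl(h_2M+h_1M_1\bigr)+\bigl(h_2\delta+h_1\delta_1\bigr).
\end{equation*}
The natural candidates are $\widetilde{M}:=h_1M-h_2M_1$, $\widetilde{\delta}:=h_1\delta-h_2\delta_1$, $\widetilde{M_1}:=h_2M+h_1M_1$ and $\widetilde{\delta_1}:=h_2\delta+h_1\delta_1$. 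The proof then splits into a G-normal part and a maximal part. We also need independence of $\widetilde{\delta}$ from $\widetilde{M}$ if the result is to fit the ADUN framework. That independence is not asserted in the statement, so we note it only as an assumption inherited from the joint structure and do not prove it.

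\emph{G-normal part.} Since $\delta_1$ is independent of $\delta$ and identically distributed, it is an independent copy. Using the symmetry $-\delta\overset{d}{=}\delta$ of the G-normal law, we write $h_1\delta-h_2\delta_1$ as $a\delta'+b\overline{\delta'}$ with $a=|h_1|$, $b=|h_2|\ge0$, where $\delta'=\pm\delta$ and $\overline{\delta'}=\pm\delta_1$ are still an independent pair of G-normal variables. By Definition~\ref{G-normal} this gives $\widetilde{\delta}\overset{d}{=}|h|\delta$. Its upper and lower variances follow from Theorem~\ref{G-normal and PDE-1}, or directly by the computation in Theorem~\ref{theorem:sufficient G-normal}: $|h|^2\overline{\sigma}^2$ and $|h|^2\underline{\sigma}^2$. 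The same argument handles $\widetilde{\delta_1}$.

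\emph{Maximal part.} For $\varphi\in C_{l,lip}(\mathbb{R})$, independence (Definition~\ref{definition:independent}) and Definition~\ref{maximal distribution} give
\begin{equation*}
\mathbb{E}[\varphi(h_1M-h_2M_1)]=\mathbb{E}\Bigl[\sup_{m_1\in[\underline{\mu},\overline{\mu}]}\varphi(h_1M-h_2m_1)\Bigr]=\sup_{m,m_1\in[\underline{\mu},\overline{\mu}]}\varphi(h_1m-h_2m_1),
\end{equation*}
because $m\mapsto\sup_{m_1}\varphi(h_1m-h_2m_1)$ is again continuous with polynomial growth. The image of the square $[\underline{\mu},\overline{\mu}]^2$ under the linear map $(m,m_1)\mapsto h_1m-h_2m_1$ is a closed interval, so $\widetilde{M}$ is maximally distributed. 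Its endpoints are obtained by maximizing or minimizing each coordinate separately according to the signs of the coefficients. For example, the maximum is $h_1^{+}\overline{\mu}-h_1^{-}\underline{\mu}+h_2^{-}\overline{\mu}-h_2^{+}\underline{\mu}$, which matches the stated upper endpoint. The minimum is symmetric. The same computation for $h_2m+h_1m_1$ gives the interval for $\widetilde{M_1}$.

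The main obstacle is the bookkeeping in two places. First, the sign and positive/negative-part identities $h_i=h_i^{+}-h_i^{-}$ must be tracked carefully. Second, the sup/inf interchange in the iterated sublinear expectation must be justified: the inner supremum has to remain in the admissible function class, and we need joint independence in the correct order. We would verify the endpoint formulas case by case on the signs of $h_1$ and $h_2$.
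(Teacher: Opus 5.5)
Your proposal is correct and follows the route the paper indicates but omits: expand $\mathrm{Re}(hZ')$ and $\mathrm{Im}(hZ')$ in $h_1,h_2$, use the independence of $\delta_1$ from $\delta$ together with G-normal stability to get $\widetilde{\delta}\overset{d}{=}|h|\delta$, and use the independence of $M_1$ from $M$ to reduce the maximal part to a supremum over $[\underline{\mu},\overline{\mu}]^2$, whose image interval has exactly the stated endpoints. The details you supply are what the paper's omitted derivation implicitly relies on: the sign flip via $-\delta\overset{d}{=}\delta$, and the fact that the inner supremum stays in $C_{l,lip}$ (it is in fact locally Lipschitz, not merely continuous with polynomial growth). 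You are also right that independence of $\widetilde{\delta}$ from $\widetilde{M}$ is not part of the claim.
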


\section{Numerical results}\label{sec6}

In this section, we present simulation results for our proposed theorems and the detection performance of our proposed optimal detector.

Fig. \ref{fig_1} shows the effect of varying the range of distributional uncertainty of channel noise parameters on the performance of the optimal detector, plotting the maximum and minimum probabilities of error occurrence for different parameter ranges at equal SNR values. In Fig. \ref{fig_1.1}, the conclusions of Theorem \ref{theorem:AGNN detector} for the ADUN channel without mean uncertainty are presented, with channel noise $\delta$ considered individually under G-normal distributions $N(0,\Theta_1)$, $N(0,\Theta_2)$, and $N(0,\Theta_3)$, where $\Theta_1=[\underline{\sigma}^2,\overline{\sigma}^2]$, $\Theta_2=[\frac{2\overline{\sigma}^2\underline{\sigma}^2}{2\overline{\sigma}^2+\underline{\sigma}^2},2\overline{\sigma}^2]$, and $\Theta_3=[\frac{3\overline{\sigma}^2\underline{\sigma}^2}{3\overline{\sigma}^2+2\underline{\sigma}^2},3\overline{\sigma}^2]$. Fig. \ref{fig_1.2} presents the conclusions of Theorem \ref{theorem:AGNNmean detector} for the ADUN channel, with channel noise $Z$ following $M_{\Gamma_1}+N(0,\Delta_1)$, $M_{\Gamma_2}+N(0,\Delta_2)$, and $M_{\Gamma_3}+N(0,\Delta_3)$, respectively, where $\Gamma_1=[\underline{\mu},\overline{\mu}]$, $\Gamma_2=[2\underline{\mu},2\overline{\mu}]$, $\Gamma_3=[3\underline{\mu},3\overline{\mu}]$, $\Delta_1=[\underline{\sigma}^2,\overline{\sigma}^2]$, $\Delta_2=[\frac{4\overline{\sigma}^2\underline{\sigma}^2(1+\overline{\mu}-\underline{\mu})^2}{2\underline{\sigma}^2-\underline{\sigma}^2(\overline{\mu}-\underline{\mu})^2+\overline{\sigma}^2(2+\overline{\mu}-\underline{\mu})^2},2\overline{\sigma}^2]$, and $\Delta_3=[\frac{3\overline{\sigma}^2\underline{\sigma}^2(2+3\overline{\mu}-3\underline{\mu})^2}{8\underline{\sigma}^2-6\underline{\sigma}^2(\overline{\mu}-\underline{\mu})^2+3\overline{\sigma}^2(2+\overline{\mu}-\underline{\mu})^2},3\overline{\sigma}^2]$. From the figures, it can be observed that for the same SNR value, the larger the range of parameter uncertainty, the greater the maximum probability of error and the smaller the minimum probability of error. Note that, under the condition of equal SNR, the uncertain parameters of channel noise can vary across different settings. In this study, we illustrate the outcomes for one specific parameter configuration; notwithstanding, the trajectories of the resulting curves are analogous for other parameter values.

\begin{figure}[htbp]
	\centering
	\subfloat[]{\includegraphics[width=2.8in]{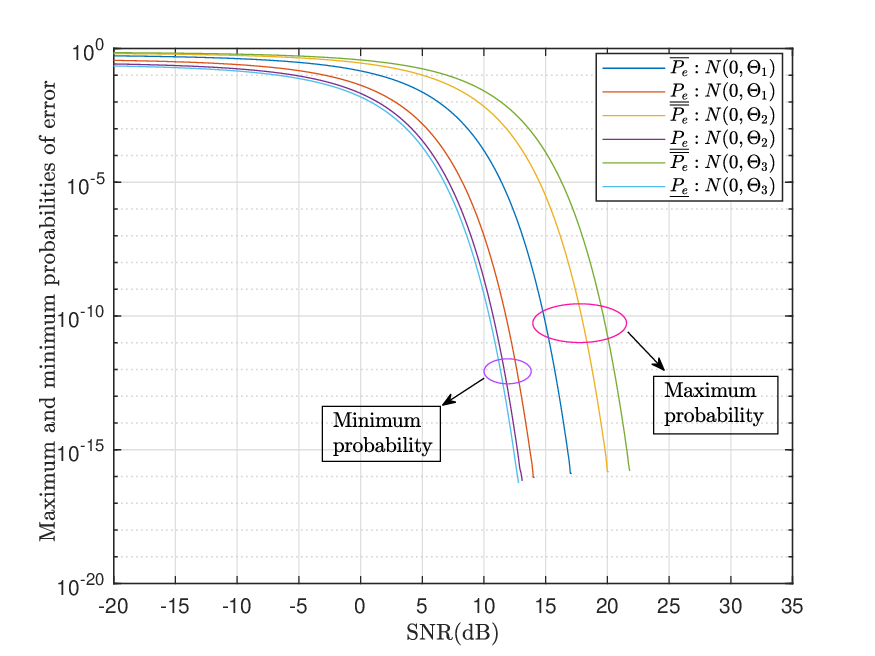}%
		\label{fig_1.1}}
	\hfil
	\subfloat[]{\includegraphics[width=2.8in]{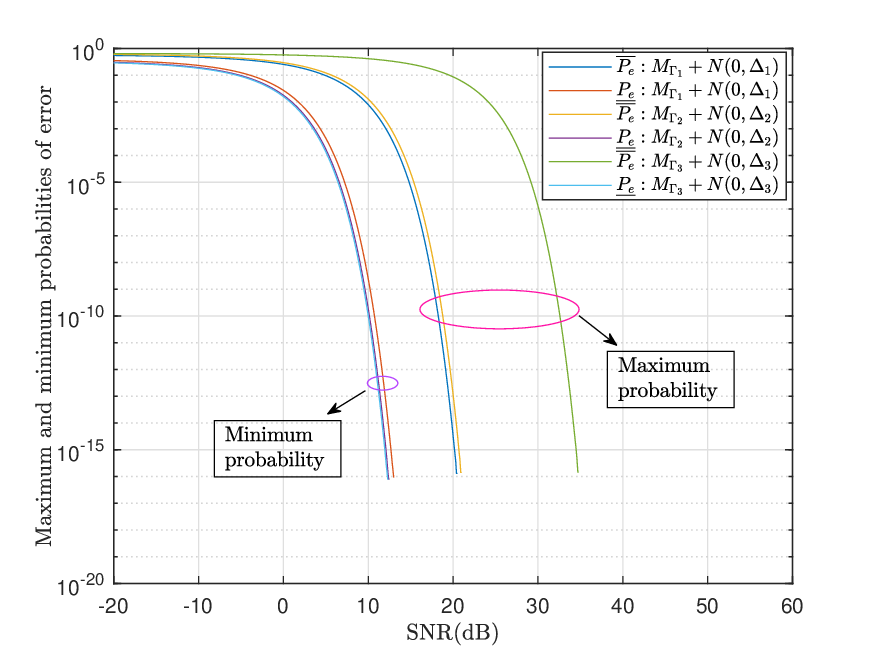}%
		\label{fig_1.2}}
	\caption{Performance of optimal threshold detectors in ADUN channel and ADUN channel without mean uncertainty under varying noise distributions.}
	\label{fig_1}
\end{figure}

Fig. \ref{fig_2} depicts the maximum and minimum probabilities of error for the ADUN channel, the ADUN channel without mean uncertainty using the optimal detector derived in this paper, and the probability of error for the AWGN channel using the minimum distance detector, at the equal SNR. The middle green line represents the results for the AWGN channel. It can be observed that the maximum and minimum probabilities derived in this study are positioned on either side of the conclusions for the AWGN channel. The blue and red lines in Fig. \ref{fig_2} represent the conclusions for the ADUN channel without mean uncertainty. We use these two lines as examples to explain the practical significance of the maximum and minimum probabilities of error. Due to the inherent uncertainty of the probability model, the transition probabilities of the channel are uncertain, making it impossible to theoretically estimate a specific error probability for a channel detector. However, if the noise data conforms to a G-normal distribution, referring to the blue line allows us to obtain a conservative estimation method. For instance, if we want the system's error probability to be below $10^{-10}$, by referring to the blue line, we can ensure it by simply increasing the SNR to $15$ dB. This conservative theoretical curve should have significant reference value for some situations that require very high accuracy, such as the application scenarios of autonomous driving and precision manufacturing. Conversely, the red line, which corresponds to the minimum error probability, represents an optimistic theoretical reference curve. To have a chance of achieving a error probability below $10^{-10}$, the red line indicates that the SNR should be set to at least $12$ dB.

From a system design perspective, the maximum and minimum error probabilities serve fundamentally different purposes. In reliability critical applications such as autonomous driving and industrial control, $\overline{P_e}$ should serve as the design benchmark. It quantifies the worst-case error rate that can occur under the worst-case noise conditions encompassed by the uncertainty model, and therefore represents an upper bound on the error probability that the system can guarantee within its declared operating range. In contrast, $\underline{P_e}$ corresponds to the error probability under the most optimistic scenario permitted by the model. It does not constitute a performance guarantee, but rather indicates the theoretical best‑case potential of the detector and can be used for diagnostics. For instance, if the long term measured error rate falls below $\underline{P_e}$, it suggests that the assumed uncertainty range of the noise is overly conservative, leading to an overestimation of the error probability; conversely, if the measured error rate exceeds $\overline{P_e}$, it may indicate that the channel has encountered noise conditions beyond those covered by the model. By jointly employing $\overline{P_e}$ and $\underline{P_e}$, one obtains a systematic framework for evaluating both the worst‑case performance guarantee and the best‑case potential of a detector under distributional uncertainty, which constitutes the practical application methodology of the theoretical results.

\begin{figure}[htbp]
	\centering
	\includegraphics[scale=0.6]{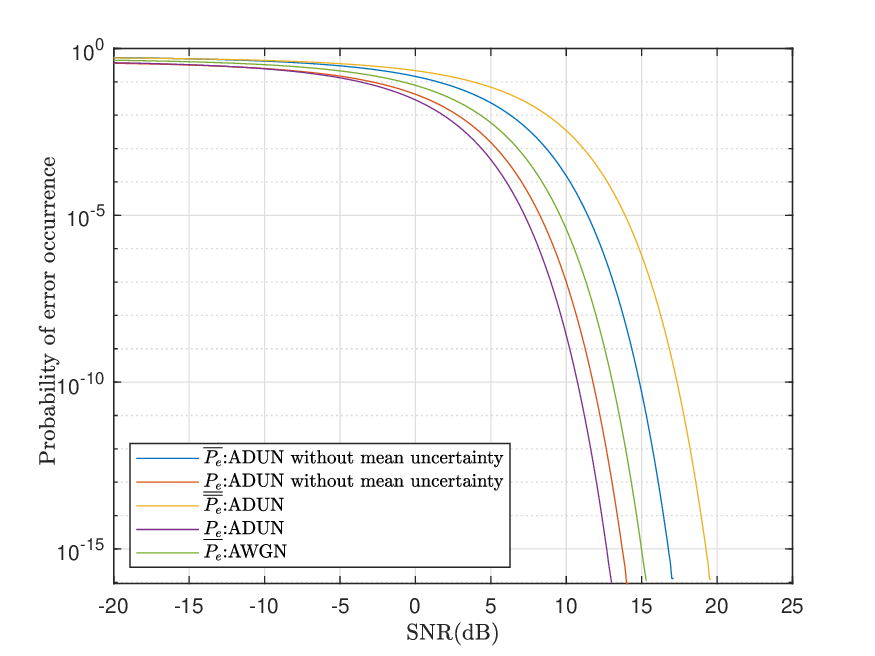}
	\caption{Comparison of optimal detector performance of ADUN channel, ADUN channel without mean uncertainty, and AWGN channel.}
	\label{fig_2}
\end{figure}

In Section \ref{subsec4.2}, we derive that if the channel noise's mean uncertainty is described as the maximum distribution $M_{[\underline{\mu},\overline{\mu}]}$, the new optimal detector is choosing $x_A$ when the received signal $Y > \frac{x_A+x_B+\overline{\mu}+\underline{\mu}}{2}$ and $x_B$ otherwise. According to the LLN under sublinear expectations, this implies that if the upper mean ($\hat{\overline{\mu}}$) and lower mean ($\hat{\underline{\mu}}$) of a set of noise data estimates are not equal, the detector can be updated using the upper and lower means. However, in classical information theory, a detector using the minimum distance criterion is employed as long as the arithmetic mean of a set of noise data is zero. In Fig. \ref{fig_3}, we conducted simulation experiments to compare the performance of the new optimal detector with that of the minimum distance detector, under conditions of uncertain noise distributions. We used binary input signal $X\in \{ -1 ,1 \}$ and simulated $10000$ sample data points for cases where the mean uncertainty is characterized by the interval $[-0.003,0.067]$ (the blue and red lines) and $[-0.021,0.074]$ (the yellow and purple lines). It can be observed from the figure that when the mean is uncertain, the performance of the new optimal detector outperforms that of the optimal minimum distance detector under classical information theory. Moreover, the larger the interval characterizing the mean uncertainty, the more significant the performance improvement becomes. This suggests that our new method is more robust to variations in noise characteristics and can adapt to a wider range of distribution uncertainty conditions. The findings underscore the potential of the new detector in applications where noise distribution is not perfectly known, offering a promising alternative to conventional detection methods.

\begin{figure}[htbp]
	\centering
	\includegraphics[scale=0.6]{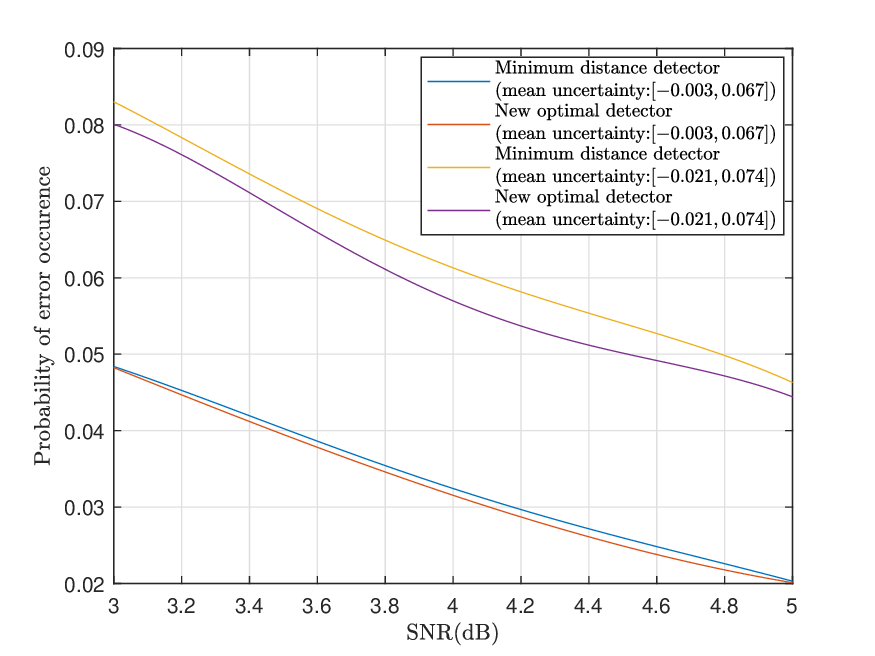}
	\caption{Simulation results of the new optimal detector and the minimum distance detector with uncertain noise distributions.}
	\label{fig_3}
\end{figure}

\section{Conclusion}\label{sec7}

In this paper, we have analyzed the fundamental detection problem for additive noise channel with uncertain noise distributions under the framework of nonlinear expectation theory. We have examined two cases involving noise uncertainty: the first scenario involves noise with both mean and variance uncertainties, while the second includes noise with no mean uncertainty but with variance uncertainty. Accordingly, we propose the concepts of the ADUN channel and ADUN channel without mean uncertainty. Based on the properties of G-normal distribution and maximum distribution from nonlinear expectation theory, we derived the family of uncertain probability distributions for output random variables given an input in these new scenarios. We then derived the optimal detectors in the sense of the nonlinear expectation optimality criterion and found that the optimal detectors are influenced by the range of mean uncertainty. Simulation experiments confirmed that the optimal detector outperforms the traditional minimum distance detector under distribution uncertainty, thereby providing experimental support for the theoretical results presented in this paper. Utilizing nonlinear expectation theory to analyze and quantify communication scenarios with uncertain probability models is a promising field. The application of nonlinear expectation theory provides a novel perspective to address the problems in information theory. We believe that this approach can reveal many previously unknown phenomena in communication system in the future.

\bigskip

\section*{Acknowledgments} 
The authors are very grateful for the help provided by Yongsheng Song in revising the manuscript.

\end{document}